\newtheorem{Thrm}{Theorem} [section]   
\newtheorem{Lem}{Lemma} [section]   
\newtheorem{Prop}{Proposition} [section] 
\newtheorem{Cor}{Corollary} [section] 
\newtheorem{Rem}{Remark} [section] 
\newtheorem{Exmpl}{Example} [section]
\newcommand{\sm}{\small} %\small 
\newcommand{\Ds}{\displaystyle} %\displaystyle 
\newcommand{\Ss}{\scriptstyle} %\scriptstyle 
\newcommand{\hs}{\hspace} % \hspace  
\newcommand{\vs}{\vspace} % \vspace  
\newcommand{\tvs}{2} % tiny \vspace  
\newcommand{\svs}{3} % sm \vspace   
\newcommand{\nvs}{5} % nrm \vspace  
\newcommand{\mvs}{8} % md \vspace  
\newcommand{\lvs}{10} % lrg \vspace 
\newcommand{\Lvs}{12} % Lrg \vspace 
\newcommand{\LVS}{18} % LRG \vspace  
\newcommand{\fn}{\footnote} % \footnote  
\newcommand{\ul}[1]{\underline{{#1}}} % underline 
\newcommand{\ol}[1]{\overline{{#1}}} % overline 
\newcommand{\strl}[2]{\stackrel{#1}{#2}} % stackrel 
\newcommand{\strld}[2]{\stackrel{{\Ds #1}}{#2}} % stackrel display
\newcommand{\Lm}{\left} %\left  
\newcommand{\Rm}{\right} %\right  
\newcommand{\Ba}{\begin{array}} %begin{array} 
\newcommand{\Ea}{\end{array}} %end{array} 
\newcommand{\Bt}{\begin{tabular}} %begin{tabular} 
\newcommand{\Et}{\end{tabular}} %end{tabular} 
\newcommand{\Bq}{\begin{quote}} %begin{quote} 
\newcommand{\Eq}{\end{quote}} %end{quote} 
\newcommand{\Bi}{\begin{itemize}} %begin{itemize} 
\newcommand{\Ei}{\end{itemize}} %end{itemize}  
\newcommand{\Be}{\begin{enumerate}} %begin{enumerate} 
\newcommand{\Ee}{\end{enumerate}} %end{enumerate}   
\newcommand{\Bd}{\begin{description}} %begin{description} 
\newcommand{\Ed}{\end{description}} %end{description}  
\newcommand{\Lag}{\langle} %left angle 
\newcommand{\Rag}{\rangle} %right angle 
\newcommand{\Bop}{(} % < ord pair 
\newcommand{\Eop}{)} %  ord pair > 
\newcommand{\BDr}{\Lag} % < draft  
\newcommand{\EDr}{\Rag} %   draft  >  
\newcommand{\BSl}{\Lag} % < slice  
\newcommand{\ESl}{\Rag} %   slice  > 
\newcommand{\drs}{,} % draft sep  
\newcommand{\dps}{:} % inp-out  sep 
\newcommand{\ud}[1]{\underline{#1}} % underl. draft  
\newcommand{\BMd}{\Lag} % < model  
\newcommand{\EMd}{\Rag} %   model  >  
\newcommand{\seq}{=}  % = 
\newcommand{\deq}{:\!=}  % =  by def
\newcommand{\rar}{\rightarrow}  % \rightarrow  
\newcommand{\lngrar}{\longrightarrow}  % \longrightarrow  
\newcommand{\Rar}{\Rightarrow}  % \Rightarrow
\newcommand{\Lar}{\Leftarrow}  % \Leftarrow  
\newcommand{\mpt}{\mapsto}  % \mapsto  
\newcommand{\incl}{\subseteq}  % inclusion 
\newcommand{\impl}{\rar}  % impl sgn 
\newcommand{\Setof}[2]{ \{ {#1} \, : \, {#2} \} } % set of 
\newcommand{\StC}[1]{ \ul{#1}  } % set of comp 
\newcommand{\Lst}[1]{ \vec{#1}  } % list of names 
\newcommand{\sz}[1]{ | {#1} | } % size of set  
\newcommand{\lng}[1]{ lg({#1})  } % length of list  
\newcommand{\NL}{\lambda}  % null list
\newcommand{\Nat}{{\rm I\!N}} % naturals 
\newcommand{\ES}{\emptyset} % emptyset 
\newcommand{\cnc}{;} % list concat 
\newcommand{\lbl}[1]{{\mathrm #1}} % labels 
\newcommand{\snt}[1]{\mathsf{#1}} % syntax 
\newcommand{\sem}[1]{{\mathfrak{#1}}} % semantics 
\newcommand{\Rl}[1]{{\tt #1}} % rule font  
\newcommand{\Nf}[1]{{\rm {#1}}} % name font  
\newcommand{\LNf}[1]{{\mathrm #1}} % list name font 
\newcommand{\Snf}[1]{{\tt #1}} % set name font font  
\newcommand{\cmp}{\cdot} % composite 
\newcommand{\eqs}{\doteq} % eq sign  
\newcommand{\Var}{\mathsf{V \! r}} % Set Var 
\newcommand{\vx}{{\tt x}} % var symb x 
\newcommand{\vy}{{\tt y}} % var symb y  
\newcommand{\vz}{{\tt z}} % var symb z  
\newcommand{\vu}{\Nf{u}} % name  u 
\newcommand{\vv}{\Nf{v}} % name v 
\newcommand{\vw}{\Nf{w}} % name w 
\newcommand{\Lu}{\LNf{u}} % set u 
\newcommand{\Lv}{\LNf{v}} % set v 
\newcommand{\Lw}{\LNf{w}} % set w
\newcommand{\Su}{\Snf{u}} % set u 
\newcommand{\Sv}{\Snf{v}} % set v 
\newcommand{\Sw}{\Snf{w}} % set w
\newcommand{\PS}{\snt{P \! r}} % Set pred symb 
\newcommand{\ps}{\lbl{p}} % pred symb p  
\newcommand{\qs}{\lbl{q}} % pred symb q  
\newcommand{\rs}{\lbl{r}} % pred symb r 
\newcommand{\pss}{\lbl{s}} % pred symb s 
\newcommand{\ts}{\lbl{t}} % pred symb t 
\newcommand{\as}{\lbl{a}} % pred symb a
\newcommand{\bs}{\lbl{b}} % pred symb b  
\newcommand{\rla}{n} % arity relation 
\newcommand{\ara}{m} % arity arc 
\newcommand{\art}{k} % arity tuple 
\newcommand{\RE}{\lbl{E}} % expr E 
\newcommand{\RF}{\lbl{F}} % expr F
\newcommand{\cmpl}[1]{\ol{#1}} % complement 
\newcommand{\Ff}{\varphi}  % form phi 
\newcommand{\Fp}{\psi}  % form psi 
\newcommand{\FP}{\Psi}  % form set Psi 
\newcommand{\Ft}{\theta}  % form theta 
\newcommand{\Fx}{\chi}  % form csi
\newcommand{\Snt}{\tau}  % sent tau
\newcommand{\IN}{\snt{N \! m}} % (individual) names   
\newcommand{\arc}[2]{{{#1} \! \slash {#2} }} % arc pair 
\newcommand{\frarc}[2]{{\Ds \frac{#1}{#2} }} % frac arc pair 
\newcommand{\NS}{\snt{N}} % Node set  
\newcommand{\AS}{\snt{A}} % Arc set 
\newcommand{\FA}[1]{\snt{a}[#1]} % arc of frml 
\newcommand{\dg}{\Sigma} % sketch  
\newcommand{\GrG}{\snt{G}} % graph  G 
\newcommand{\GrH}{\snt{H}} % graph  H 
\newcommand{\SlP}{\snt{P}} % slice  P 
\newcommand{\SlS}{\snt{S}} % slice  S 
\newcommand{\SlT}{\snt{T}} % slice  T 
\newcommand{\EG}{ \{ \hs{3pt} \} } % empty graph 
\newcommand{\DLS}[1]{\hat{#1}} % disting lst of slc 
\newcommand{\DLSS}{ \DLS{ \snt{s} } } % disting lst of slc S 
\newcommand{\DLST}{ \DLS{ \snt{t} } } % disting lst of slc T 
\newcommand{\SN}[1]{\Snf{NF}[{#1}]} % set of names in form   
\newcommand{\SkA}[1]{\snt{Sk}[#1]} % sketch of arc set  
\newcommand{\SlE}[1]{\snt{Sl}(#1)} % slice of expr  
\newcommand{\SlN}[2]{\snt{Sl}_{#2}[#1]} % slice of expr under nms 
\newcommand{\NDS}[2]{\snt{DS}[{#1}\, \angle \, {#2}]} % diff slice of arc set and arc
\newcommand{\ada}{+} % add arc  
\newcommand{\rpm}{\slash} % repl mrk 
\newcommand{\Brn}{[} %   [ ren   
\newcommand{\Ern}{]} %   ren  ] 
\newcommand{\AlSL}[1]{\top_{#1}} %  arcless slc lst w 
\newcommand{\AlSS}[2]{{}^{#2} \! \top_{#1}} %  arcless slc sets  
\newcommand{\ggl}[3]{{{#1}^{#2}}{#3}} % glue  
\newcommand{\GGL}[5]{ {#1}[ #2: #3, \dots, #4:#5] } % gen glue  
\newcommand{\GbL}[5]{ {#1}[ #2: #3, #4:#5] } % bin glue
\newcommand{\RZS}{{\cal Z}} % family of zero slices 
\newcommand{\ZS}{\RZS_0} % family of zero slices 
\newcommand{\EZS}{\RZS_*} % family of exp zero slices 
\newcommand{\NZS}{\RZS_\infty} % family of non exp zero slices 
\newcommand{\mor}{\dasharrow} % morphism 
\newcommand{\cf}[1]{\nu_{#1}}  % colim morphism nu_n
\newcommand{\Fm}[1]{\mu_{#1}}  %  morphism mu_n 
\newcommand{\FM}[2]{\mu \frac{#1}{#2}}  %  morphism mu_n 
\newcommand{\Mor}[2]{\mathsf{Mor}[ {#1},{#2} ]} % set of  morphisms
\newcommand{\Hom}{\rar} % homomorphism 
\newcommand{\iso}{\cong} % isoomorphic slices 
\newcommand{\ha}{\mu'} % hom to point u 
\newcommand{\hb}{\mu''} % hom point T 
\newcommand{\he}{\eta} % hom eta 
\newcommand{\hsg}{\nu'} % hom from u 
\newcommand{\hta}{\nu''} % hom from T 
\newcommand{\rtcs}{ \star} % refl trns closure sign
\newcommand{\rtc}[1]{ {#1}^\rtcs} % refl trns closure  
\newcommand{\red}{\rhd} % red to 
\newcommand{\cnvu}[1]{\strl{(#1)}{\red}} % red under 
\newcommand{\ex}{\lhd} % expands to 
\newcommand{\bsf}[1]{{#1}^{\tt b}} % basic form of  
\newcommand{\slc}[1]{\snt{SC}(#1)} % slice compl  of 
\newcommand{\scN}[2]{\snt{SC}_{#2}[#1]} % slice compl  of 
\newcommand{\bsfE}[1]{\lbl{B_e}({#1})} % basic form of expr 
\newcommand{\bsfS}[1]{\snt{B_s}[{#1}]} % basic form of slice 
\newcommand{\bsfG}[1]{\snt{B_g}[{#1}]} % basic form of graph 
\newcommand{\Der}{\vdash} % derives   
\newcommand{\rl}{\Rl{r\!l}} % generic rule 
\newcommand{\AtRl}{\alpha} % Atm frm rule 
\newcommand{\CndRl}{\impl} % Conf frm rule 
\newcommand{\QtRl}{Q^\ast} % Qnt rule
\newcommand{\EQRl}{\exists^\ast} % Ex qnt rule 
\newcommand{\UQRl}{\forall^\ast} % Un qnt rule
\newcommand{\StrSlRl}{\strl{ \SlT}{\rar}} % str slc glue rule 
\newcommand{\StrGrRl}{\strl{ \cup }{\lngrar}} % str graph add rule 
\newcommand{\DrStrGGRl}{\strl{ \GrH }{\lngrar}} % der str graph glue rule 
\newcommand{\StrFlCUn}{\cmpl{\cup}} % str full compl un 
\newcommand{\StrFlLu}{\uparrow} % str expr  up 
\newcommand{\DrStrCGRl}{\strl{ \cmpl{\GrH}}{\lngrar}} % der str compl graph add rule 
\newcommand{\DrStrEqarc}{\strl{ \eqs }{\rar}} % der str add of eq arc
\newcommand{\dc}{{}^{\cmpl{\cmpl{\hs{4pt}}}}} % dbl compl 
\newcommand{\gM}{\sem{M}} %model 
\newcommand{\g}{{\tt g}}  % assignment g
\newcommand{\h}{{\tt h}}  % assignment h
\newcommand{\asgsat}[1]{{[\![}#1{]\!]}} %assn st
\newcommand{\rel}[1]{{{\mathsf[}}#1{{\mathsf ]}}}  % rel. of label 
\newcommand{\arcsat}[1]{\Vdash_{#1}} % g arc sat undr M
\newcommand{\cnq}{\models} % sktch cnsq
\newcommand{\betM}[1]{{[\![}#1{]\!]_{\gM}}} %model extension 
\newcommand{\Sk}[1]{{\tt Sk}(#1)}  % sketch of model 
\newcommand{\eq}{\equiv} % equivalent  expr's
\newcommand{\SnEx}[1]{\varepsilon({#1})}  % sent of exprl  
\newcommand{\FrAr}[1]{\alpha({#1})}  % form of arc 
\newcommand{\FrDr}[1]{\delta({#1})}  % form of draft 
\newcommand{\SnSl}[1]{\sigma({#1})}  % sent of slice 
\newcommand{\SnGr}[1]{\gamma({#1})}  % sent of graph 
\newcommand{\FStSk}[1]{\Psi({#1})}  % form set of sketch 
\newcommand{\rkd}[1]{{\tt rk}(#1)}  % rank of draft 
\newcommand{\rka}[1]{{\mathsf{rk}}(#1)}  % rank of arc 
\newcommand{\rkl}[1]{{\mathsf{rk}}(#1)}  % rank of label 
\newcommand{\rks}[1]{{\tt rk}(#1)}  % rank of slice 
\newcommand{\inp}{{\Ss \rightarrow}} % input 
\newcommand{\dsz}{2}  % dash size 
\title{A Graph Calculus for Predicate Logic\thanks{Research partly sponsored by the Brazilian  agencies CNPq and FAPERJ.}}
\author{Paulo A.~S.  Veloso
%\institute{COPPE-UFRJ}
\institute{COPPE-UFRJ \\ Systems and Computer Engin. Program\\
UFRJ: Federal University of Rio de Janeiro\\
RJ, Brazil}
\email{pasveloso@gmail.com}
\and
%Co Author 
\qquad\qquad Sheila R. M. Veloso 
\institute{FEN-UERJ \\ 
Systems and Computer Engin. Dept., Fac. of Engineering \\ 
  UERJ: State University of Rio de Janeiro\\
    RJ , Brazil}
\email{%\quad is@gmail.com
 \quad\qquad sheila.murgel.bridge@gmail.com}
}
\begin{document}

\maketitle 

%--------------------------------------------------------------------

\begin{abstract} 
\noindent We introduce a refutation graph calculus for classical first-order predicate logic, 
which is an extension of previous ones for binary relations. 
One reduces logical consequence  to establishing that a constructed graph has empty extension, 
 i.~e. it represents    $\bot$. 
 Our calculus establishes that a graph has empty extension by converting  it to a normal form, 
which is expanded to other graphs  until we can recognize conflicting situations 
(equivalent to a formula and its negation). 
\end{abstract}

%-----------------------------------------------------------------------
%\begin{small} \tableofcontents  \end{small}
%-----------------------------------------------------------------------
  
%----------------------------------------------------------------------

\section{Introduction}   \label{sec:Intr} 
We present a  refutation graph calculus for classical first-order predicate logic. 
This approach is based on reducing logical consequence to 
showing that a constructed graph has empty extension,   representing the logical constant $\bot$.  
Our sound and complete calculus establishes when a  graph has empty extension. 

	For instance, given formulas $\Fp$, $\Ft$ and $\Ff$,  to establish that  
	$\Ff$ follows from $\{ \Fp , \Ft \}$, we construct a graph $\GrG$  corresponding to 
$\{ \Fp , \Ft  \} \cup \{ \neg \Ff  \}$ and show that $\GrG$ has empty extension. 
Now,  our calculus establishes that a graph has empty extension by converting  it to a normal form, 
which is expanded to other graphs  until we can recognize conflicting situations 
(equivalent to a formula and its negation). 

	Formulas are often %usually 
	written down on a single line~\cite{CL_96}.     
Graph calculi rely on  two-dimensional representations  providing better visualization~\cite{CL_95}.\fn{The structure of 
$(x + y ) \cmp (z \div w )$ is more apparent  in the 
notation 
$\Lm ( \Ba{c} x \\ + \\ y \Ea  \Rm ) \cmp  \Lm ( \Ba{c} z \\ \div \\ w \Ea  \Rm )$ (see also~\cite{Bar_10}).} 
In the realm of binary relations, a simple calculus (with linear derivations)~\cite{CL_95,CL_96}  
was extended for handling complement: direct   calculi~\cite{FVVV_08,FVVV_10} and 
refutation   calculi~\cite{VV_11L}.  
Our new calculus is a further extension, 
inheriting much of the earlier terminology (such as `graph', `slice' and `arc'), 
together with some ideas from Peirce's diagrams for relations~\cite{Sw_??,Dau_06}.
The present calculus involves two new aspects:  
extension to arbitrary predicates (which affects the representation) and 
allowing formulas within the graphs.

	The structure of this paper is as follows. 
Section~\ref{sec:Motv} motivates the underlying ideas with some illustrative examples. Section~\ref{sec:GrLng} introduces our   graph  language: syntax, semantics and some constructions.  
In Section~\ref{sec:GrClc}  we introduce our graph calculus: its rules and goal. 
 Section~\ref{sec:Concl} presents some concluding remarks, including comparison with related works. 
% We  leave  some details for an Appendix.  

%-----------------------------------------------------------------------
\section{Motivation} \label{sec:Motv} 
We begin by motivating our ideas with some illustrative examples. 

	We know that consequence can be reduced to unsatisfiability. 
We will indicate how one can represent formulas graphically  and then establish consequence  by graphical means. 

	First, %let us 
we indicate how we can represent (some) formulas graphically 
(see~\ref{subsec:Sntxsem}  for more details). 

	We represent an atomic formula by arrows  to predicate symbols coming from its arguments. 
So, we represent the formulas 
$\ps(  \vu)$ and $\rs(\vu, \vv)$, respectively, as follows: 
\vs{- \nvs pt}
\[  \Ba{ccc} \xymatrix@R16pt@C10pt{ & \ps \ar@{<-}[d]   &  \\ &  \vu & } 
& \hs{\LVS pt} & 
\xymatrix@R18pt@C10pt{ & \rs   & \\  \vu \ar@{->}[ur] & & \vv  \ar@{-->}[ul]  }
\Ea \]  
The former illustrates a $1$-ary arc. 
The latter is an example of a $2$-ary arc, which will be satisfied by the choices of values  $a$ and $b$ for  
 $\vu$ and $\vv$, respectively, with the pair  $\Bop a , b \Eop$ in the  $2$-ary relation interpreting $\rs$. 

	We  obtain a representation for a conjunction by joining those of its formulas. 
So, we represent the formula $\ps(  \vu) \land \rs(\vu, \vv)$ as follows: 
\vs{- \mvs pt} 
\[   \xymatrix@R16pt@C10pt{ \ps \ar@{<-}[dr] & & \rs   & \\ & \vu \ar@{->}[ur] & & \vv  \ar@{-->}[ul]  } \] 
 This  set of $2$ arcs is an example of a draft, 
 which also  represents the set  $\{\ps(  \vu) , \rs(\vu, \vv) \}$. 
 This draft $\snt{D}$	will be satisfied exactly by the assignments satisfying both 
 $\ps(  \vu)$ and $\rs(\vu, \vv)$. 
 
	To represent an existential quantification, we hide the node corresponding to the 
quantified variable, leaving only the rest visible.  
For instance, from formula $\ps( \vu) \land \rs(\vu, \vv)$, 
we  obtain $\exists \vx \, ( \ps( \vx) \land \rs(\vx, \vv) )$. 
We can use the representation of the former to represent the latter: we place   
the above  draft $\snt{D}$  within a box and mark $\vv$ as visible, which we represent as follows: 
\vs{- \nvs pt}  
\[   \fbox{
\xymatrix@R18pt@C10pt{ \ps \ar@{<-}[dr] & & \rs   & \\ & \vu \ar@{->}[ur] & & \vec{\vv}  \ar@{-->}[ul]  } 
}  \] 
This is an example of a $1$-ary slice. The interpretation of this slice $\snt{S}$ is the $1$-ary relation consisting of the values $b$ such that, for some $a$, the assignment  $\vu \mpt a$, $\vv \mpt b$ satisfies the 
underlying draft $\snt{D}$.  

	Now, we can represent  formula $\neg \exists \vx \, ( \ps(  \vx) \land \rs(\vx, \vv) )$ 
by complementing this slice $\SlS$. %  for $\exists \vx \, ( \ps(  \vx) \land \rs(\vx, \vv) )$. 
As ${}^{\cmpl{\hs{\nvs pt}}}$ stands for complement, we represent  
 $\neg \exists \vx \, ( \ps(  \vx) \land \rs(\vx, \vv) )$ as follows: 
\vs{- \nvs pt}  
\[  \cmpl{ \fbox{
\xymatrix@R18pt@C10pt{ \ps \ar@{<-}[dr] & & \rs   & \\ & \vu \ar@{->}[ur] & & \vec{\vv}  \ar@{-->}[ul]  }} 
}  \] 
 
	A draft  consists of finite  sets of names and of arcs (giving constraints on the names).  
A slice consists of a draft and a list of distinguished names, which we indicate  by %numbers or 
special marks, such as `$\inp$'. %(see~\ref{subsec:Sntxsem}). 
 
 	Next, we illustrate how one can establish consequence by graphical means.  
The idea is reducing unsatisfiability of a (finite) set of formulas to that of its corresponding draft. 

	We begin with an example that is basically propositional.  
Then, we examine other examples with equality $\eqs$ and existential quantifiers 
(see~\ref{subsec:Cnstr} and~\ref{subsec:Red} for more details). 

%	We begin with an example that is basically propositional.  
	
\begin{Exmpl}  \label{Exmpl:cnj}
Consider $\ps(\vu) \land \qs(\vu) \cnq \ps(\vu)$. 
As mentioned, we reduce it to $\{ \ps(\vu) \land \qs(\vu) , \neg \ps(\vu) \} \cnq \bot$. 
We can represent the formulas by (sets of) arcs as follows: 
\vs{- \nvs pt}
\[ \Ba{cccccc} 
\ps(\vu) \hs{\lvs pt} & \qs(\vu) & \hs{\lvs pt} &  
\ps(\vu) \land \qs(\vu) \hs{\lvs pt} & \hs{\Lvs pt}  &  \hs{\lvs pt} \neg \ps(\vu)   \vs{\lvs pt} \\  
	\xymatrix@R14pt@C10pt{ \ps \ar@{<-}[dr] &   &  \\ & \vu & } & 
	\xymatrix@R14pt@C10pt{ & \qs \ar@{<-}[d]   &  \\ &  \vu & } 
& \hs{\lvs pt} &  
	\xymatrix@R14pt@C10pt{ \ps \ar@{<-}[dr] & \qs  \ar@{<-}[d]  & \\ & \vu & } 
& \hs{\Lvs pt}  & 
	\xymatrix@R14pt@C10pt{ &   & \cmpl{\ps}  \ar@{<-}[dl]  \\ & \vu & } 
\Ea \] 
We can obtain a representation for the set  $\{ \ps(\vu) \land \qs(\vu) , \neg \ps(\vu) \}$ by joining those of its formulas: 
\vs{- \nvs pt} 
\[ \xymatrix@R14pt@C10pt{ \ps \ar@{<-}[dr] & \qs  \ar@{<-}[d]  & \cmpl{\ps}  \ar@{<-}[dl]  \\ & \vu & } \]
Within this draft for  $\{ \ps(\vu) \land \qs(\vu) , \neg \ps(\vu) \}$, 
we find the conflicting  situation (as ${}^{\cmpl{\hs{\nvs pt}}}$ stands for complement): 
\vs{- \mvs pt}
\[ \xymatrix@R14pt@C10pt{\ps \ar@{<-}[dr] &  & \cmpl{\ps}  \ar@{<-}[dl]  \\ 
	& \vu & }  \]
Thus,  the representation of  $\{ \ps(\vu) \land \qs(\vu) , \neg \ps(\vu) \}$ 
is unsatisfiable. 
\end{Exmpl}   % \label{Exmpl:cnj}

\begin{Exmpl}    \label{Exmpl:notcnq}
We know that  $\ps(\vu) \not \cnq \ps(\vv)$, i.~e. 
$\{ \ps(\vu)  , \neg \ps(\vv) \} \not \cnq \bot$.  
The  corresponding draft     is: 
\vs{- \nvs pt} 
\[ \xymatrix@R12pt@C10pt{ \ps \ar@{<-}[d] &   \cmpl{\ps}  \ar@{<-}[d]  \\ 
	\vu & \vv  }  \] 
Here, we do not find conflicting arcs.\fn{Indeed, we have:  
\vs{- \mvs pt} 
$\Ba{ccc}
	\Ba{ccc}  
	\xymatrix@R9pt@C8pt{   \ps \ar@{<-}[d] \\  \vu }  & \Ba{c}  \\ \\  \mbox{ but not } \Ea &  
	\xymatrix@R9pt@C8pt{  \cmpl{\ps} \ar@{<-}[d] \\  \vu }  \Ea 
& \hs{\Lvs pt} \mbox{ and }  \hs{\Lvs pt}  & 
	\Ba{ccc}  
	\xymatrix@R9pt@C8pt{  \cmpl{\ps} \ar@{<-}[d] \\  \vv }  & \Ba{c}  \\ \\  \mbox{ but not } \Ea &  
	\xymatrix@R9pt@C8pt{  \ps \ar@{<-}[d] \\  \vv }  \Ea 
\Ea$\!\!\!.}  
In fact, we can read from the representation a model $\gM = \BMd M , \ps^\gM \EMd$,  
with $M \deq \{ \vu, \vv \}$ and  $\ps^\gM \deq \{ \vu \}$, where  one can satisfy  $\ps(\vu)$ and $\neg \ps(\vv)$. 
\end{Exmpl}   % \label{Exmpl:notcnq}

%	We now consider a simple  example with equality  $\eqs$.  
	
\begin{Exmpl}    \label{Exmpl:eq} 
We reduce $\ps(\vv) \land \vv \eqs \vu \cnq \ps(\vu)$ %, which we reduce 
to  the unsatisfiability of the set $\{ \ps(\vv) \land  \vv \eqs \vu , \neg \ps(\vu) \}$. 
We have the graphical representations as sets of arcs as follows: 
\vs{- \nvs pt} 
\[ \Ba{ccccc} \ps(\vv) \land  \vv \eqs \vu \hs{\lvs pt} & & \hs{\LVS pt} \neg \ps(\vu) & &
	\hs{\lvs pt}  \{ \ps(\vv) \land \vv \eqs \vu , \neg \ps(\vu) \} \vs{\lvs pt} \\ 
\xymatrix@R16pt@C10pt{ 
	\ps \ar@{<-}[dr]  &  & \ar@{<-}[dl]  \eqs \ar@{<--}[dr]  &  &  \\ 
	 & \vv &  & \vu &  } &  &
	\xymatrix@R16pt@C10pt{  
	  & \ar@{<-}[dl]   \cmpl{\ps} \\ 
	 \vu &  }  & & 
	\xymatrix@R16pt@C10pt{  
	\ps \ar@{<-}[dr]  &  & \ar@{<-}[dl]  \eqs \ar@{<--}[dr]  &  & \ar@{<-}[dl]   \cmpl{\ps} \\ 
	 & \vv &  & \vu &  }
\Ea \] 
Now, we can simplify the representation of $\{ \ps(\vv) \land  \vv \eqs \vu , \neg \ps(\vu) \}$, 
by renaming $\vv$ to $\vu$:  
\vs{- \lvs pt} 
\[ \Ba{ccc} 
	\xymatrix@R16pt@C10pt{  
	\ps \ar@{<-}[dr]  &  & \ar@{<-}[dl]  \eqs \ar@{<--}[dr]  &  & \ar@{<-}[dl]   \cmpl{\ps} \\ 
	 & \vv &  & \vu &  }
&  \Ba{c}  \\ \\  \mbox{ transforms to } \Ea  & 
	\xymatrix@R12pt@C10pt{  \ps \ar@{<-}[dr] &  & \cmpl{\ps}  \ar@{<-}[dl]  \\ 
	& \vu & }
\Ea \] 
This final representation is not satisfiable (cf. Example~\ref{Exmpl:cnj}). 
\end{Exmpl}   % \label{Exmpl:eq}  

%	We now examine a simple  example with quantifier  $\exists$.  
	
\begin{Exmpl}    \label{Exmpl:exst} 
We reduce $\rs(\vu, \vv) \cnq \exists \vz \, \rs( \vu , \vz)$  
to  $\{ \rs(\vu, \vv) , \neg \exists \vz   \rs( \vu , \vz) \} \cnq \bot$. 
As before, we can represent formula $\rs( \vu , \vv)$ by the single-arc draft: 
\vs{- \svs pt} 
\[ \xymatrix@R17pt@C10pt{ & \rs   & \\  \vu \ar@{->}[ur] & & \vv  \ar@{-->}[ul]  } \]
Also, we can represent  $\neg \exists \vz   \rs( \vu , \vz)$ by the following $1$-ary arc:  
\vs{- \nvs pt} 
\[ \xy 
(0,0)++*{\vu}="u";  (0.2,-0.5)++*{}="u+"; (12,-16)*{ \cmpl{\fbox{\xy %(-1,0)*{}; 
(0,0)+*{\vec{\vu}}="u"; (0.2,-0.5)++*{}="u+";  (8,-12.7)*{ \rs  }="r" ; 
 (16,0)*{\vv}="w" ;  (15.5,-1.5)*{}="w-" ;  (7.4,-11.4)*{ }="r-" ; (8.6,-11.4)*{ }="r+" ; 
 "u+";"r-" ** \dir{-} ?>* \dir{>};  "w-";"r+" ** \dir{--} ?>* \dir{>}; 
\endxy } } }="v" ; "u+";"v"**\dir{-} ?>* \dir{>}
\endxy \] 
Thus, we can represent  $\{ \rs(\vu, \vv) , \neg \exists \vz \rs( \vu , \vz) \}$  by the draft:   
 \vs{- \svs pt}   
\[ \xy (10,10.4)*{\rs}="r" ;  (9.2,9.2)*{}="r-" ; (10.8,9.2)*{}="r+" ; 
(20,0)*{\vv}="v" ; (19.0,0.8)*{}="v-" ;  ;  "v-";"r+" ** \dir{--} ?>* \dir{>}; 
(0,0)+*{\vu \, }="u"; (10,-15)*{ \cmpl{  \fbox{\xy %(-1,0)*{}; 
(0,0)+*{\vec{\vu}}="u"; (0.2,-0.5)++*{}="u+";  (8,-12.7)*{ \rs  }="r" ; 
 (16,0)*{\vv}="w" ;  (15.5,-1.5)*{}="w-" ;  (7.4,-11.4)*{ }="r-" ; (8.6,-11.4)*{ }="r+" ; 
 "u+";"r-" ** \dir{-} ?>* \dir{>};  "w-";"r+" ** \dir{--} ?>* \dir{>}; 
\endxy } }}="u'" **\dir{-} ?>* \dir{>}; 
"u";"r-" ** \dir{-} ?>* \dir{>}
\endxy \] 
Now, with  $\vec{\vu} \mpt \vu ,\vv \mpt \vv$,  we have a copy of the slice under complement within the draft, namely: 
\vs{- \svs pt} 
\[ \Ba{ccc} \mbox{slice} &  \rar  & \mbox{draft} \vs{\nvs pt}  \\ 
\fbox{\xy 
(0,0)+*{\vec{\vu}}="u";   (0.2,0.7)+*{}="u+";  (8,12.8)*{\rs}="r"; %**\dir{-}; 
(16,0)*{\vv}="w"; (15.5,0.7)*{}="w-"; (7.4,11.4)*{ }="r-" ; (8.6,11.4)*{ }="r+" ; 
 "u+";"r-"**\dir{-} ?>* \dir{>};   "w-";"r+"**\dir{--} ?>* \dir{>}
\endxy } & & 
\xy (0,0)+*{\vu}="u";   (0.2,0.7)+*{}="u+"; (10,10.8)*{\rs}="r";  %**\dir{-} ; 
(20,0)*{\vv}="v" ; (19.5,0.7)*{}="v-"; (9.4,9.4)*{ }="r-" ; (10.6,9.4)*{ }="r+" ;   
  "u+";"r-"**\dir{-} ?>* \dir{>};    "v-";"r+"**\dir{--} ?>* \dir{>}  
\endxy
\Ea \] 
So, the  representation of $\{ \rs(\vu, \vv) , \neg \exists \vz \rs( \vu , \vz) \}$ is not satisfiable.  
\end{Exmpl}   % \label{Exmpl:exst}  
	
%	We now consider  another example with quantifiers.  
	
\begin{Exmpl}    \label{Exmpl:exsts2}
We reduce $\exists \vx \exists \vy [ \rs(\vu, \vx) \land \pss(\vx, \vy) ] \cnq \exists \vz \rs(\vu, \vz)$  to  
$\{ \exists \vx \exists \vy [ \rs(\vu, \vx) \land \pss(\vx, \vy) ] , \neg  \exists \vz \rs(\vu, \vz) \} \cnq \bot$. 
Proceeding as before, we can be represent 
$\{ \exists \vz  \exists \vy [ \rs(\vx, \vz) \land \pss(\vz, \vy) ] , \neg \exists \vz \rs(\vu, \vz) \}$ as: 
\vs{- \tvs pt} 
\[ \xy (24,14)*{ \fbox{ \xy 
(0,0)+*{ \vec{\vu}}="x"; (0.5,0.5)+*{}="x+";  (8,12.8)*{\rs}="r";  
 (16,0)*{\vv}="z" ;   (15.3,0.5)*{}="z-" ; (16.7,0.5)*{}="z+" ;  (24,12.8)*{\pss}="s" ; 
 (30,0)*{\vw}="y"  ;  (29.5,0.5)*{}="y+"  ;   (7.4,11.4)*{}="r-";   (8.6,11.4)*{}="r+";  
 (23.4,11.4)*{}="s-" ;   (24.6,11.4)*{}="s+" ; 
   "x+";"r-"**\dir{-} ?>* \dir{>};     "z-";"r+"**\dir{--} ?>* \dir{>}; 
    "z+";"s-"**\dir{-} ?>* \dir{>};  "y+";"s+"**\dir{--} ?>* \dir{>} 
\endxy} };  (0.8,0.8)**\dir{-} ?<* \dir{<} ; 
 (0,0)+*{\vu \, }="x";  (12,-18)*{ \cmpl{\fbox{\xy 
 (0,0)+*{ \vec{\vu}}="x"; (0.5,-0.6)+*{ }="x+"; (8,-12.8)*{\rs}="r" ;  
(16,0)*{\vv}="v" ;  (15.5,-1.0)*{}="v-" ;  (7.4,-11.4)*{}="r-" ;  (8.6,-11.4)*{}="r+" ; 
"x+";"r-"**\dir{-} ?>* \dir{>}; "v-";"r+"**\dir{--} ?>* \dir{>} 
\endxy }}}="v" ; (0.4,-0.8)**\dir{-} ?<* \dir{<} 
\endxy \] 
We can transform this representation into the following one,  which is, much as before, unsatisfiable. 
\vs{- \tvs pt} 
\[ \xy 
(0,0)+*{ \vu \, }="x"; (0.5,0.5)+*{}="x+"; (8,12.8)*{\rs}="r" ;  
 (16,0)*{\vv}="z" ;   (15.3,0.5)*{}="z-" ; (16.7,0.5)*{}="z+" ;  (24,12.8)*{\pss}="s";    
 (30,0)*{\vw}="y" ;   (29.5,0.5)*{}="y+"  ; 
  (7.4,11.4)*{}="r-" ;  (8.6,11.4)*{}="r+" ; 
  (23.4,11.4)*{}="s-";   (24.6,11.4)*{}="s+";  
 (0,0)+*{ };  (12,-18)*{ \cmpl{\fbox{\xy 
(0,0)+*{ \vec{\vu}}="x"; (0.5,-0.6)+*{ }="x+"; (8,-12.8)*{\rs}="r" ;  
(16,0)*{\vv}="v" ;  (15.5,-1.0)*{}="v-" ;  (7.4,-11.4)*{}="r-" ;  (8.6,-11.4)*{}="r+" ; 
"x+";"r-"**\dir{-} ?>* \dir{>}; "v-";"r+"**\dir{--} ?>* \dir{>} 
\endxy }}}="v"; 
  (0.8,-0.8)**\dir{-} ?<* \dir{<} ;  %**\dir{-} ?>* \dir{>} ; 
"x+";"r-"**\dir{-} ?>* \dir{>};   "z-";"r+"**\dir{--} ?>* \dir{>}; 
 "z+";"s-"**\dir{-} ?>* \dir{>};  "y+";"s+"**\dir{--} ?>* \dir{>} 
\endxy \] 
%This final representation is, much as before, unsatisfiable. 
\end{Exmpl}   
%-----------------------------------------------------------------------

\section{Graph  Language} \label{sec:GrLng} 
We now introduce our  concepts: expressions, slices and graphs will give relations, 
whereas arcs, sketches and drafts will correspond to constraints.  
We will  examine syntax and semantics (in~\ref{subsec:Sntxsem})   and then 
some concepts and constructions  (in~\ref{subsec:Cnstr}).  

	We first introduce some notations.
		Given a function $f  : A \rar B$, we use $f(a)$ or $a^f$ for 
its \emph{value} at an element $a \in A$; which we extend to lists and sets. 
For a list $a \seq \Lag a_1, \dots, a_\art \Rag \in A^\art$, we  use 
$f(a)$ or $a^f$ for the  \emph{list of values}  $\Lag {a_1}^f, \dots, {a_\art}^f\Rag \in B^\art$; 
for a set  $\NS$, we  use 
$f(\NS)$ or $\NS^f$ for the  \emph{set of values}  $\Setof{a^f}{ a \in \NS}$. 
Given a list $a = \Lag a_1, \dots, a_\art \Rag \in A^\art$, we   employ  
$\StC{a}$ 
for its \emph{set of components}. 
%We use for $\sz{\NS}$ the \emph{size} of a finite set $\NS$ and 
%$\lng{a}$ for the \emph{length} of a list $a$.
The \emph{null list} is $\NL \deq \Lag \hs{\svs pt} \Rag$. %\fn{
We sometimes write a list $\Lag a_1, \dots, a_\art \Rag$ simply as $a_1\, \dots \, a_\art$. %} 

 	We will use names (or parameters) for marking free places and 
variables for marking bound places, 
as usual in Proof Theory~\cite{Tak_75}.  
To quantify a formula $\Ff$ we replace a name $\vu$ by a new variable (not appearing in $\Ff$) 
obtaining $\exists \vx \, \Ff \Brn \vu \rpm \vx \Ern$ and  $\forall \vx \, \Ff \Brn \vu \rpm \vx \Ern$. 
Also, given  lists $\Lu$, of $\rla$ distinct names, and $\vx$, of $\rla$ distinct variables not occurring in $\Ff$, 
we have the formulas $\exists^\rla \vx \, \Ff \Brn \Lu \rpm \vx \Ern$ and  $\forall^\rla \vx \, \Ff \Brn \Lu \rpm \vx \Ern$.     
	
	We will consider first-order predicate languages (without function symbols, except the constant $\bot$), each one characterized by 
 pairwise disjoint sets as follows:    
\Bd 
\item[{\sm ($\IN$)}] 
an infinite linearly ordered \emph{set  of names} $\IN$; % where each finite nonempty subset has a first element;
\item[{\sm ($\Var$)}] 
a  denumerably infinite \emph{set of variables} $\Var$;   
\item[{\sm ($\PS$)}] 
(possibly empty, but  pairwise disjoint) sets $\PS_\rla$ of  $\rla$-ary \emph{predicate symbols}, for $\rla \in \Nat$. 
\Ed 

	Given $\ara \in \Nat_+$, we use  $\vu_{\ara}$ for the \emph{$\ara$th name}.  
Given $\rla \in \Nat$, we use  $\Lu^{\rla} \deq \Lag  \vu_{1}, \dots  \vu_{\rla} \Rag$ for the 
\emph{list of the first $\rla$ names} (with $\Lu^{0} = \NL$).
Also, given a set $\Sv \incl \IN$ of names, we use $\Lst{\Sv}$ for the list of the names in $\vv$  in the ordering of 
$\IN$. 
For a formula $\Ff$, we use $\SN{\Ff}$ for the \emph{set of names occurring in} $\Ff$.

%---------------------------------------------------------------------
  
 \subsection{Syntax and semantics} \label{subsec:Sntxsem}
 We now introduce  the syntax and semantics of our concepts.
We first examine  the syntax of our concepts.  

	The objects of our graph language are defined  by mutual recursion  as follows.  
\Bd 
\item[{\sm ($\RE$)}] An $\rla$-ary \emph{expression}  is   
an $\rla$-ary predicate symbol, %a predicate symbol $\ps \in \PS_\rla$,  
a formula with $\rla$  names,  
an $\rla$-ary slice or graph (see below), or 
 $\cmpl{\RE}$,  where $\RE$ is an $\rla$-ary expression. 
 For instance,  $\bot$ is a $0$-ary expression, 
 $\eqs$ and $\cmpl{\eqs}$ are $2$-ary expressions, whereas 
$\ps(\vu)$  and $\cmpl{\ps}$ (for $\ps \in \PS_1$) are $1$-ary expressions.   

 \item[{\sm ($\snt{a}$)}] An $\ara$-ary \emph{arc} $\snt{a}$ over set $\NS \incl \IN$ of names  
 is a pair 
 $\arc{\RE}{\Lv}$ (also noted  $\frarc{\RE}{\Lv}$), 
where $\RE$ is  an $\ara$-ary expression and   $\Lv \in \NS^\ara$. 
Examples %of arcs 
are $\arc{\bot}{\NL}$,   
$\arc{\eqs}{ \vu \, \vv }$, 
$\arc{\cmpl{\ps}}{ \vu }$, $\arc{\qs(\vu)}{ \vv }$ (for $\ps, \qs \in \PS_1$) and 
$\arc{\pss}{ \vu \, \vw }$ (for $\pss \in \PS_2$).

\item[{\sm ($\dg$)}] A \emph{sketch} $\dg \seq \BDr \NS \drs \AS \EDr$   
consists of sets $\NS \incl \IN$ of \emph{names} and $\AS$ of  \emph{arcs} over $\NS$. 
 
\item[{\sm ($\snt{D}$)}] A \emph{draft} $\snt{D} \seq \BDr \NS \drs \AS \EDr$ is a sketch with 
finite sets $\NS$ of names and $\AS$ of arcs. 
An example of draft is $\snt{D}' \seq  \BDr \{ \vu, \vu', \vv, \vw, \vw' \}  \drs 
\{ \arc{\cmpl{\ps}}{ \vu }, \arc{\qs(\vu)}{ \vv }, \arc{\eqs}{ \vw \, \vw' },
\arc{\pss}{ \vu \, \vw } \} \EDr$. 

\item[{\sm ($\snt{S}$)}]  An $\rla$-ary \emph{slice} $\SlS \seq \BSl \ul{\SlS} \dps \DLSS  \ESl$   
consists of 
its \emph{underlying draft}   $\ud{\snt{S}} \deq \BDr \NS \drs \AS \EDr$  and a 
 \emph{distinguished list}  $\DLSS$, with $\DLSS \in \NS^\rla$. 
 For instance,  $\SlS \seq \BSl\{ \vu, \vu', \vv, \vw, \vw' \}  \drs 
\{ \arc{\cmpl{\ps}}{ \vu }, \arc{\qs(\vu)}{ \vv }, \arc{\eqs}{ \vw \, \vw' },
\arc{\pss}{ \vu \, \vw } \} \dps  \vu \, \vv \, \vv  \ESl$ is a $3$-ary slice with 
underlying draft $\ud{\snt{S}} \seq \snt{D}'$ (as above) and distinguished list $\DLSS \seq \Lag \vu, \vv, \vv \Rag$.
 
\item[{\sm ($\snt{G}$)}]  An $\rla$-ary \emph{graph}   is a finite set of $\rla$-ary slices. 

\Ed  
In particular, the empty graph $\{ \hs{3pt} \}$ has no slice. 
Example~\ref{Exmpl:Exprl} (in~\ref{subsec:Drv}) will show a $2$-slice graph. 
	
	Note that expressions, arcs, slices and graphs are finite objects,  
whereas sketches are not necessarily so. Sketches will be useful for representing models and constructing co-limits. Also, some concepts and results do not depend on finiteness (see~\ref{subsec:Cnstr}), 
which will be important in Section~\ref{sec:GrClc}.
We wish to represent these finite objects graphically by drawings   
(cf. the examples in Section~\ref{sec:Motv}).    
For this purpose, we employ two sorts of nodes: 
name nodes (labeled by names) and  expression nodes (labeled by expressions).  
Some representations aiming at  precision and readability are as follows. 

	We represent an $\ara$-ary arc $\arc{\RE}{\vv}$, with $\Lv \seq \Lag \vv_1, \dots,  \vv_\ara \Rag$, by  
$\ara$ arrows connecting each node labeled by $\vv_i$ to the node labeled by $\RE$.  
For instance, we can draw a  $3$-ary arc $\arc{\ts}{\Lag \vu,  \vv, \vw \Rag}$ as 
{\sm $\Ba{ccccc} 
&  &  \ts  & &  \\ 
&   \nearrow &  \uparrow &  \nwarrow  \\ 
 \vu & & \vv & & \vw 
\Ea $}.   
To clarify  (as in  $\frarc{\ts}{ \vu \, \vv \, \vu }$), we may use distinct kinds of lines or label  them  by numbers.\fn{We often employ  full, dashed, dotted and wavy  lines, respectively, for the 1st,  2nd,   3rd and 4th arguments of expressions.}  
A more compact version uses is $\vv_1 \, \strld{\rs}{\rar} \,  \vv_2$ for the $2$-arc $\arc{\rs}{\vv_1  \vv_2}$,  
representing $1$-ary,  $3$-ary and $4$-ary arcs, respectively, as:  % follows:   %arc $\arc{\RE}{\vv_1\, \vv_2 \, \vv_3}$ and  $4$-ary arc $\arc{\RE}{\vv_1\, \vv_2 \, \vv_3 \, \vv_4}$ as 
$\xymatrix@R12pt@C8pt{ \ps \\ \vv_1   \ar@{-}[u] }$, 
$\xymatrix@R12pt@C8pt{
\vv_1 \ar[rr]^{{\Ds \,\ts }} & & \vv_3\\
&\vv_2  \ar@{.>}[u] & }$  and 
$\xymatrix@R12pt@C8pt{
\vv_1 \ar[rr]^{{\Ds \,\qs }} & & \vv_4\\
\vv_2  \ar@{-->}[rr] & \ar@{.>}[u] & \vv_3 }$.

	We can indicate the components of a distinguished list by marking their nodes, say  with numbers, 
e. g. $\Lag \vu, \vv, \vu \Rag$ by $ \vu^{1, 3} \vv^2$. 
Also, it may be  convenient  (for easier visualization) to enclose a slice $\SlS$ within a full box, \fbox{$\SlS$}, and 
a graph $\GrG$ within a dashed box, \dashbox{\dsz}(15, 15)[]{$\GrG$}. 
For instance, Example~\ref{Exmpl:exsts2} (in Section~\ref{sec:Motv}) shows a $0$-ary slice 
$\SlS \seq \BSl \{  \vu , \vv, \vw \}  \drs \AS \dps  \NL \ESl$, with %set of arcs 
$\AS \seq  \{ \arc{\rs}{  \vu \, \vv } , \arc{\pss}{  \vv \, \vw } , \arc{\cmpl{\SlT}}{ \vu } \}$, 
where $\SlT$ is the $1$-ary slice 
$\BSl \{  \vu , \vv \}  \drs \{ \arc{\rs}{  \vu \, \vv } \} \dps  \vu  \ESl$. 

%Alternatively, for $2$-ary arcs %and $3$-ary arcs 
%we can also use more compact drawings, namely, for the $2$-arc $\arc{\RE}{\vv_1  \vv_2}$ we will use the notation 
% $\xymatrix{\vv_1 \ar[r]^{{\Ds \,\RE }}&\vv_2}$.%
% and for the $3$-ary arc $\arc{\RE}{\vv_1\, \vv_2 \, \vv_3}$ we will use the notation $\xymatrix@R8pt@C8pt{
%\vv_1 \ar[rr]^{{\Ds \,\RE }}&&\vv_3\\
%&\vv_2  \ar[u]&
% }
%$.
%To clarify  (as in  $\frarc{\ts}{ \vu \, \vv \, \vu }$), we may use distinct kinds of lines or label  them  by numbers.\fn{We often employ  full, dashed, dotted and wavy  lines, respectively, for the 1st,  2nd,   3rd and 4th arguments of expressions.}  
%We can indicate the components of a distinguished list by marking their nodes, say  with numbers, 
%e. g. $\Lag \vu, \vv, \vu \Rag$ by $ \vu^{1, 3} \vv^2$. 
%Also, it may be  convenient  (for easier visualization) to enclose a slice $\SlS$ within a full box, \fbox{$\SlS$}, and 
%a graph $\GrG$ within a dashed box, \dashbox{\dsz}(15, 15)[]{$\GrG$}. 
%For instance, Example~\ref{Exmpl:exsts2} (in Section~\ref{sec:Motv}) shows a $0$-ary slice 
%$\SlS \seq \BSl \{  \vu , \vv, \vw \}  \drs \AS \dps  \NL \ESl$, with %set of arcs 
%$\AS \seq  \{ \arc{\rs}{  \vu \, \vv } , \arc{\pss}{  \vv \, \vw } , \arc{\cmpl{\SlT}}{ \vu } \}$, 
%where $\SlT$ is the $1$-ary slice $\BSl \{  \vu , \vv \}  \drs \{ \arc{\rs}{  \vu \, \vv } \} \dps  \vu  \ESl$. 

 	Given a list $\Lw$ of names, the \emph{arcless $\vw$ slice} is the  slice 
$\AlSL{\Lw} \deq \BSl  \StC{\Lw} \drs \ES \dps   \Lw \ESl$.  
The \emph{arcless $\ara$-ary slice} is the  slice $ \AlSL{\ara} \deq  \AlSL{\Lu^{\ara}} $ %, where 
($\Lu^{\ara}$ is the list of the first $\ara$ names) and the \emph{$\ara$-node arcless  draft} is 
$ \ud{\AlSL{\ara}} \seq \BDr \StC{ \vu^{\ara}} \drs \ES \EDr$.
 The \emph{arc of  formula} $\Ff$, with set $\vv$ of names, is $\FA{\Ff} \deq \arc{\Ff}{\Lst{\vv}}$. %\fn{ 
 (The arc of  a sentence $\Snt$ is  $0$-ary: $\FA{\Snt} \seq \arc{\Snt}{\NL}$, which we represent as 
the expression node $\Snt$.) %}    
The \emph{sketch of the set of arcs}  $\AS$ is the sketch 
$\SkA{\AS} \deq \BDr \NS \drs \AS \EDr$,  where $\NS$ consists of the names occurring in the arcs of 
$\AS$: $\NS \deq \bigcup \Setof{\StC{\Lw} \incl \IN}{\arc{\RE}{\Lw} \in \AS}$. 
For instance, $\SkA{\{ \arc{\pss}{\vu \, \vv} , \arc{\ps(\vv)}{\vw} \}} \seq 
	\BDr \{ \vu, \vv, \vw \} \drs \{ \arc{\pss}{\vu \, \vv} , \arc{\ps(\vv)}{\vw} \} \EDr$.

	We may wish to add an arc $\snt{a} \seq \arc{\RE}{\Lv}$ to a sketch, a slice or a  graph.  
For a  sketch  $\dg \seq \BDr \NS \drs \AS \EDr$,  we set 
$\dg \ada  \snt{a} \deq \BDr \NS  \cup \StC{\vv} \drs \AS \cup \{ \snt{a} \}  \EDr$; 
for a slice $\snt{S} \seq \BSl \ul{\snt{S}} \dps \DLSS \ESl$,  we set  
	$\SlS \ada  \snt{a} \deq  \BSl \ul{\snt{S}} \ada  \snt{a} \dps \DLSS \ESl$; 
for a graph  $\snt{G}$,  we set  $\snt{G} \ada  \snt{a} \deq \Setof{\SlS \ada  \snt{a}}{\SlS \in \snt{G}}$. 
The \emph{difference slice} of a finite set of arcs $\AS$ with respect to 
an arc $\snt{a} \seq \arc{\RE}{\Lv}$ is the $0$-ary slice 
$\NDS{\AS}{\snt{a}} \deq \BSl \SkA{\AS}  \ada \arc{\cmpl{\RE}}{\Lv} \dps \NL \ESl$. 
For instance, Example~\ref{Exmpl:eq} (in Section~\ref{sec:Motv}) represents the set  
$\{ \ps(\vv) \land  \vv \eqs \vu , \neg \ps(\vu) \}$ by the difference slice 
$\NDS{\{ \arc{\ps}{ \vv },  \arc{\eqs}{ \vv \, \vu } \} }{\arc{\ps}{ \vu }}$. We will give some intuition for using   $0$-ary slices~in \ref{subsec:Cnstr}.

%---------------------------------------------------------------------
%\subsection{Semantics} \label{subsec:Smntcs}
	We now examine  the semantics  of our  concepts and related ideas. 

	A \emph{model}  $\gM$ has as its  universe a set  %non-empty set 
$M \neq \ES$ and 
realizes each $\rla$-ary predicate symbol $\ps \in \PS_\rla$ as an $\rla$-ary relation  
${\ps}^{\gM} \subseteq M^\rla$ (with  $\eqs^\gM \deq \Setof{\Lag a , a \Rag \in M^2}{a \in M}$).  
An \emph{$M$-assignment} for  set $\NS \incl \IN$ of names is a function  $\g : \NS \rar M$.  
A formula $\Ff$ with  set $\Sv$ of $\rla$ names  \emph{defines} 
the $\rla$-ary relation  $\Ff^\gM  \subseteq M^\rla$ consisting of the values of its 
ordered names   for the  assignments satisfying $\Ff$: %, namely 
$\Ff^\gM \deq \Setof{{\Lst{\Lv}}^\h \in M^\rla}{ \gM \models \Ff \, \asgsat{\h}}$. %\fn{
For instance, for $2$-ary predicate symbol $\rs$, % $\rs \in \PS_2$, we have  
$\rs( \vu_1 , \vu_2)^\gM \seq \rs^{\gM}$,  
$\rs( \vu_2 , \vu_1)^\gM \seq \Setof{\Lag b , a \Rag \in M^2}{\Lag a , b \Rag \in {\rs}^{\gM}}$ and 
$\rs( \vu_1 , \vu_1)^\gM \seq \Setof{\Lag  a \Rag \in M^1}{\Lag a , a  \Rag \in {\rs}^{\gM}}$. %}
Also, $\bot^\gM \deq \ES$.

	We  now  introduce the meanings of the concepts, again  by mutual recursion. 	
\Bd 
\item[{\sm ($\RE$)}] We define the \emph{relation} of an expression    as follows. 
For a predicate symbol $\ps$ we have its relation: $\rel{\ps}_{\gM} \deq {\ps}^{\gM}$;  
for formula $\Ff$ we have its defined relation: $\rel{\Ff}_{\gM} \deq {\Ff}^{\gM}$;    
for a  slice $\SlS$ or  graph $\snt{G}$, 
we use the extensions:    
$\rel{\snt{S}}_{\gM} \deq \betM{\SlS}$ and $\rel{\snt{G}}_{\gM} \deq  \betM{\GrG}$ (see below);  
for $\cmpl{\RE}$,  where $\RE$ is an $\rla$-ary expression,  
	we use the complement: 
	$\rel{\cmpl{\RE}}_{\gM} \deq M^\rla \setminus \rel{\RE}_{\gM}$.  

\item[{\sm ($\snt{a}$)}] An M-assignment $\g : \NS \rar M$  \emph{satisfies} an  
$\ara$-ary arc $\arc{\RE}{\Lv}$ over $\NS$ in $\gM$   
(noted $\g \arcsat{\gM} \arc{\RE}{\Lv}$) iff 
$\Lv \in \NS^\ara$ and  
$\Lv^\g \in \rel{\RE}_{\gM}$.  For instance, 
$\g \arcsat{\gM} \arc{\eqs}{ \vu \, \vv }$ iff $\vu^\g = \vv^\g$ and 
$\g \arcsat{\gM} \arc{\ps}{ \vw }$ iff $\vw^\g \in \rel{\ps}_{\gM} = {\ps}^{\gM}$. 

\item[{\sm ($\dg$)}]  An assignment $\g$  \emph{satisfies} a sketch  
$\dg \seq \BDr \NS \drs \AS \EDr$   in $\gM$ 
(noted $\g : \dg \rar \gM$) iff $\g$ satisfies every arc $\snt{a} \in \AS$. 

\item[{\sm ($\snt{S}$)}] The \emph{extension} of a slice is the relation consisting of values of its distinguished list    for the  assignments satisfying its underlying draft;  for  
an $\rla$-ary  slice 
$\snt{S} \seq \BSl \ul{\SlS} \dps \DLSS \ESl$, %i.~e.  %, we set  
$\betM{\snt{S}} \deq \Setof{ \DLSS^\g  \in M^\rla}{\g : \ud{\snt{S}}  \rar \gM}$. 

\item[{\sm ($\snt{G}$)}]  The \emph{extension} of a  graph is the union of those of %the extensions of 
its slices: 
    $\betM{\snt{G}} \deq \bigcup_{\snt{S} \in \snt{G}}\betM{\snt{S}}$.

\Ed 
Clearly,   $\g \arcsat{\gM} \arc{\cmpl{\RE}}{\Lv}$ iff  $\g \not \arcsat{\gM} \arc{\RE}{\Lv}$. 
Also, the arcless $\ara$-ary slice  $\AlSL{\ara} $ has extension $\betM{ \AlSL{\ara} } \seq M^\ara$. 

	An expression $\RE$ is \emph{null} iff $\rel{\RE}_{\gM} \seq \ES$ in every model $\gM$. %\fn{
For instance, the empty graph $\EG$ is null.%}  
	
	Given a sketch  $\dg \seq \BDr \NS \drs  \AS \EDr$  and an arc $\snt{a} \seq \arc{\RE}{\Lv}$, we say that 
 $\snt{a}$ \emph{is a consequence of} $\dg$ 
 (noted $\dg \cnq \snt{a}$)  iff, for every model $\gM$ and 
  $M$-assignment $\g  :  \NS \cup \StC{\Lv} \rar M$, 
$\g$ satisfies $\snt{a}$ whenever $\g$ satisfies $\dg$. 
Call expressions $\RE$ and $\RF$ \emph{equivalent}  (noted $\RE \eq \RF$) iff, 
for every model $\gM$,  $\rel{\RE}_{\gM} \seq \rel{\RF}_{\gM}$. %\fn{
A slice $\SlS$ and the singleton graph  $\{ \SlS \}$ are equivalent 
(so they may be identified). %}    
 
	We can reduce  consequence  to the difference slice: 
an arc $\snt{a}$ is a consequence of a draft $\snt{D}$ iff  
 the difference slice $\NDS{\AS}{\snt{a}}$ is null. 
So, we can also reduce logical consequence  to a difference slice.\fn{Recall that 
$\FP \cnq \Ft$  iff,  for every model $\gM$ and  assignment $\h$, 
$\h$ satisfies $\Ft$ whenever $\h$ satisfies every  $\Fp \in\FP$.}  
	
\begin{Prop}  \label{Prop:LgCnDS}
 Given  a finite set $\FP$ of formulas  and a formula $\Ft$:
$\FP \cnq \Ft$ iff       the difference slice 
$\NDS{\Setof{\FA{\Fp}}{\Fp \in \FP}}{\FA{\Ft}}$ is null.  
\end{Prop}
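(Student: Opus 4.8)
The plan is to route through the reduction of draft-consequence to nullity of a difference slice recorded just above the proposition. Writing $\AS \deq \Setof{\FA{\Fp}}{\Fp \in \FP}$, the difference slice in the statement is exactly $\NDS{\AS}{\FA{\Ft}}$, and the arc set of the draft $\SkA{\AS}$ is exactly $\AS$; so that reduction tells us $\NDS{\Setof{\FA{\Fp}}{\Fp \in \FP}}{\FA{\Ft}}$ is null iff the arc $\FA{\Ft}$ is a consequence of $\SkA{\AS}$. Hence it will be enough to show that $\SkA{\AS} \cnq \FA{\Ft}$ holds iff $\FP \cnq \Ft$.

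The one real ingredient I need is a translation lemma connecting formula-arcs with satisfaction: for every model $\gM$ and every $M$-assignment $\g$ whose domain contains $\SN{\Ff}$, $\g \arcsat{\gM} \FA{\Ff}$ iff $\gM \models \Ff \asgsat{\g}$. I would prove this by unfolding definitions: $\g \arcsat{\gM} \FA{\Ff}$ means $\g$ satisfies $\arc{\Ff}{\Lst{\SN{\Ff}}}$, i.e. $\Lst{\SN{\Ff}}^{\g} \in \rel{\Ff}_{\gM} = \Ff^{\gM} = \Setof{\Lst{\SN{\Ff}}^{\h}}{\gM \models \Ff \asgsat{\h}}$; and since (in our setting) every name of $\Ff$ is free and truth of $\Ff$ under an assignment depends only on its values on $\SN{\Ff}$, while $\Lst{\SN{\Ff}}$ lists those names in the fixed ordering of $\IN$, membership of $\Lst{\SN{\Ff}}^{\g}$ in $\Ff^{\gM}$ collapses to $\gM \models \Ff \asgsat{\g}$. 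The complementary reading $\g \arcsat{\gM} \arc{\cmpl{\Ft}}{\Lst{\SN{\Ft}}}$ iff $\gM \not\models \Ft \asgsat{\g}$ then follows because the relation of $\cmpl{\RE}$ is the set-complement of that of $\RE$.

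With the lemma in hand, I would finish as follows. An $M$-assignment $\g$ satisfies the draft $\SkA{\AS}$ iff it satisfies each arc $\FA{\Fp}$, i.e. iff $\gM \models \Fp \asgsat{\g}$ for every $\Fp \in \FP$; and $\g$ satisfies the arc $\FA{\Ft}$ iff $\gM \models \Ft \asgsat{\g}$. Feeding this into the definition of arc-consequence of a draft (where $\g$ ranges over assignments on $\bigcup_{\Fp \in \FP}\SN{\Fp} \cup \SN{\Ft}$), $\SkA{\AS} \cnq \FA{\Ft}$ says precisely that every model together with an assignment satisfying all of $\FP$ also satisfies $\Ft$ --- which is the definition of $\FP \cnq \Ft$; the switch between the two ranges of assignments is harmless, since extending or restricting an assignment away from the free names of a formula leaves its truth value unchanged. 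Together with the first paragraph this gives the stated equivalence.

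I expect the translation lemma to be the only delicate point, and within it the clause that truth of $\Ff$ under $\h$ depends only on $\h$ restricted to $\SN{\Ff}$ --- this is exactly what turns ``$\Lst{\SN{\Ff}}^{\g}$ belongs to $\Ff^{\gM}$'' into ``$\g$ itself satisfies $\Ff$'', rather than the weaker ``$\g$ agrees on $\SN{\Ff}$ with some satisfying assignment''. Everything else is routine expansion of the definitions of satisfaction, extension and the difference slice; should one prefer to avoid citing the preceding reduction, the same bookkeeping applied directly to the $0$-ary slice $\NDS{\AS}{\FA{\Ft}}$ --- whose extension is empty iff its underlying draft has no satisfying assignment, as $M^{0}$ is a singleton --- proves the proposition from scratch.
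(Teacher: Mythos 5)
Your proposal is correct and follows essentially the paper's own route: the paper proves the proposition "by the preceding remark" (arc-consequence of a draft iff the difference slice is null) together with the observation that $\g \arcsat{\gM} \FA{\Ff}$ iff $\gM \models \Ff \, \asgsat{\g}$, which are exactly the two ingredients you use. You merely spell out in more detail the translation lemma and the bookkeeping about assignment domains that the paper leaves implicit.
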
   
\begin{proof} 
By  the preceding remark, since $\g \arcsat{\gM} \FA{\Ff}$ iff  $\gM \models \Ff \, \asgsat{\g}$. 
\end{proof}   %   \label{Prop:LgCnDS}  

	Section~\ref{sec:GrClc} will present a calculus for establishing that an expression is null.
	
%\subsection{Syntax and semantics} \label{subsec:Sntxsem}
%---------------------------------------------------------------------
\subsection{Concepts and constructions} \label{subsec:Cnstr} 
We now examine some concepts and constructions. 

 	We first introduce  morphisms for comparing sketches.   
	
	Consider sketches  $\dg' \seq \BDr \NS' \drs  \AS' \EDr$  and $\dg'' \seq \BDr \NS'' \drs  \AS'' \EDr$. 
A function $\he: \NS'' \rar \NS'$ is a \emph{morphism} from $\dg''$ to $\dg'$   
(noted $\he: \dg'' \mor \dg'$) iff it preserves arcs:   
for every arc $\arc{\RE}{\Lv} \in \AS''$, we have $\arc{\RE}{ \Lv^\he } \in \AS'$. 
We use $\Mor{\dg''}{\dg'}$ for the \emph{set of  morphisms}  from $\dg''$ to $\dg'$.  

\begin{Exmpl}  \label{Exmpl:mor} 
Given $\ps \in \Pr_1$ and $\qs, \rs, \pss, \ts, \as, \bs \in \Pr_2$, consider the  drafts 
$\snt{D}' \seq \BDr \NS'  \drs \AS' \EDr$ and $\snt{D}'' \seq \BDr \NS''  \drs \AS'' \EDr$, 
with sets of nodes 
$\NS' \seq \{ \vu , \vv, \vv' , \vw , \vw' \}$ and 
$\NS'' \seq \{ \vu_1 ,  \vu_2 ,  \vu_3 ,  \vv, \vv_1 , \vv_2 , \vw , \vw_1, \vw_2 , \vw' \}$,  
and sets of arcs 
\[ \AS' \seq \{ \arc{\qs}{\vv \, \vw} , \arc{\ps}{\vw'} , \arc{\rs}{\vv \, \vw'}  , \arc{\pss}{\vv \, \vu} , 
	 \arc{\ts}{\vu \, \vw}  , \arc{\as}{\vu \, \vv'} , \arc{\bs}{\vv' \, \vw} \} \mbox{ and }  \] 
\[ \AS'' \seq \{ \arc{\qs}{\vv_1 \, \vw_1} , \arc{\qs}{\vv_2 \, \vw_2} ,  \arc{\ps}{\vw'} ,  
	\arc{\rs}{\vv \, \vw'} , \arc{\rs}{\vv_1 \, \vw'} ,  \arc{\rs}{\vv_2 \, \vw'} , \arc{\pss}{\vv_2 \, \vu_3} , 
	 \arc{\ts}{\vu_2 \, \vw_1}  , \arc{\as}{\vu_1 \, \vv'} , \arc{\as}{\vu_3 \, \vv'} ,  
	 \arc{\bs}{\vv' \, \vw} ,  \arc{\bs}{\vv' \, \vw_1} ,  \arc{\bs}{\vv' \, \vw_2} \}. \]  
These drafts $\snt{D}'$ and $\snt{D}''$ can be represented as %follows:
 in Figure~\ref{Fig:DrD}.  %Figures~\ref{Fig:DrD'} and~\ref{Fig:DrD''}. 
The mapping $\vv' \mpt \vv'$;  $\vw' \mpt \vw'$; $\vv, \vv_1 ,\vv_2 \mpt \vv$; 
 $\vw, \vw_1 ,\vw_2 \mpt \vw$  and $\vu_1, \vu_2 ,\vu_3 \mpt \vu$ 
preserves arcs.\fn{For 
 instance, for arc $\arc{\ps}{\vw}$ of $\snt{D}''$, we have 
 arc $\arc{\ps}{\vw'}$ of $\snt{D}'$;  
 for arcs $\arc{\qs}{\vv_1 \, \vw_1}$ and $\arc{\qs}{\vv_2 \, \vw_2}$  of $\snt{D}''$, 
 we have arc $\arc{\qs}{\vv \, \vw}$ of $\snt{D}'$.}  
So, we  have a morphism   $\he: \snt{D}'' \mor \snt{D}'$.  
We also have formulas $\FrDr{\snt{D}'}$ and $\FrDr{\snt{D}''}$ such that 
$\g : \snt{D}' \rar \gM$ iff $\gM \models \FrDr{\snt{D}'}  \, \asgsat{\g}$ and 
$\g : \snt{D}'' \rar \gM$ iff $\gM \models  \FrDr{\snt{D}''}  \, \asgsat{\g}$.\fn{Take %We can take 
$\FrDr{\snt{D}'}$ as 
$\qs(\vv , \vw)  \land \ps(\vw') \land  \rs(\vv , \vw')  \land 
	 \pss(\vv , \vu) \land  \ts(\vu , \vw) \land \as(\vu , \vv') \land \bs(\vv' , \vw) $ and 
$\FrDr{\snt{D}''}$ as the conjunction of $\qs(\vv_1 , \vw_1)$, $\qs(\vv_2 , \vw_2)$, $\ps(\vw')$,    
	$\rs(\vv , \vw')$, $\rs(\vv_1 , \vw')$, $\rs(\vv_2 , \vw')$, $\pss(\vv_2 , \vu_3)$,    
	 $\ts(\vu_2 , \vw_1)$, $\as(\vu_1 , \vv')$,  $\as(\vu_3 , \vv')$,  
	$\bs(\vv' , \vw)$,  $\bs(\vv' , \vw_1)$ and  $\bs(\vv' , \vw_2)$.} 
\end{Exmpl} % \label{Exmpl:mor} 
%diminuindo 

\begin{figure}[htb] 
\vs{- \svs pt}
\[ \Ba{ccc} \snt{D}' & \hs{\LVS pt}  \hs{\LVS pt}   &  \snt{D}'' \vs{\mvs pt} \\
\xymatrix@R26pt@C20pt{ %1 & 2 & 3 & 4 & 5 \\  
\ps \ar@{-}[r] &\vw'& \vv\ar[l]_{{\Ds \rs}} \ar[r]^{{\Ds \qs}}\ar[d]_{{\Ds \pss}}&\vw\\
&&\vu\ar[ur]^{{\Ds \ts}}\ar[r]^{{\Ds \as}}
&\vv'\ar[u]_{{\Ds \bs}}
}
& &
\xymatrix@R20pt@C22pt{ %0 & 1 & 2 & 3 & 4 & 5 & 6 \\  
&\vv_1\ar[rr]^{{\Ds \qs}}\ar[d]_{{\Ds \rs}}&&\vw_1&&\ar[ll]_{{\Ds \ts}}\vu_2\\
\ps \ar@{-}[r]&\vw'&\ar[l]_{{\Ds \rs}}\vv_2\ar[r]^{{\Ds \qs}}\ar[dr]_{{\Ds \pss}}&\vw_2&\vv'\ar[l]_{{\Ds \bs}}\ar[ul]_{{\Ds \bs}}\ar[r]^{{\Ds \bs}}&\vw\\
&\vv\ar[u]^{{\Ds \rs}}&&\vu_3\ar[ur]_{{\Ds \as}}&&\vu_1\ar[ul]_{{\Ds \as}}\\
%&&&\vu_3\ar[ur]_{{\Ds \as}}
}
\Ea
\]
\caption{Drafts $\snt{D}'$ and $\snt{D}''$ (Example~\ref{Exmpl:mor})} \label{Fig:DrD}
\end{figure}

	 A  morphism transfers satisfying assignments by composition. 
	 
\begin{Lem} \label{Lem:Hmtrnsfr} 
Given a   morphism     $\he: \dg'' \mor \dg'$,  
for every assignment $\g : \NS_{\dg'} \rar M$ satisfying  $\dg'$, 
the composite  $\g \cmp \he : \NS_{\dg''} \rar M$ is an assignment satisfying  $\dg''$. 
\end{Lem}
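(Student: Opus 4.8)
The plan is a direct arc-by-arc check. First I would observe that the composite $\g \cmp \he$ is indeed a well-formed $M$-assignment for $\NS_{\dg''}$: since $\he : \NS_{\dg''} \rar \NS_{\dg'}$ is a function into the domain of $\g : \NS_{\dg'} \rar M$, the composition $\g \cmp \he : \NS_{\dg''} \rar M$ is defined, and it acts on a name $\vv$ by $\vv^{\g \cmp \he} = (\vv^\he)^\g$. By the definition of satisfaction of a sketch, it then remains to show that $\g \cmp \he$ satisfies every arc of $\AS_{\dg''}$.

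So fix an arbitrary arc $\arc{\RE}{\Lv} \in \AS_{\dg''}$, with $\Lv = \Lag \vv_1, \dots, \vv_\ara \Rag \in (\NS_{\dg''})^\ara$. Since $\he$ is a morphism $\dg'' \mor \dg'$, arc-preservation yields $\arc{\RE}{\Lv^\he} \in \AS_{\dg'}$, where $\Lv^\he = \Lag \vv_1^\he, \dots, \vv_\ara^\he \Rag$. As $\g$ satisfies $\dg'$, in particular $\g \arcsat{\gM} \arc{\RE}{\Lv^\he}$, i.e. $(\Lv^\he)^\g \in \rel{\RE}_{\gM}$.

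The only bookkeeping point --- and it is essentially the whole content of the lemma --- is the identity $(\Lv^\he)^\g = \Lv^{\g \cmp \he}$: applying a function to a list componentwise commutes with composition, since for each $i$ one has $(\vv_i^\he)^\g = \vv_i^{\g \cmp \he}$ by the definition of $\g \cmp \he$. Substituting, $\Lv^{\g \cmp \he} \in \rel{\RE}_{\gM}$, which is precisely $\g \cmp \he \arcsat{\gM} \arc{\RE}{\Lv}$. Since the arc was arbitrary, $\g \cmp \he$ satisfies $\dg''$, i.e. $\g \cmp \he : \dg'' \rar \gM$.

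I do not expect a genuine obstacle here: the argument is a one-step diagram chase, the sort of statement whose proof is almost forced by the definitions (arc preservation together with $\Lv^{\g \cmp \he} = (\Lv^\he)^\g$). The only places calling for slight care are the direction of the composition $\g \cmp \he$ (dictated by the domains and codomains) and the appeal to the already-established convention that functions extend to lists componentwise, so that list-application and composition commute.
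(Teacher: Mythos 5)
Your proof is correct and follows essentially the same route as the paper's: fix an arbitrary arc of $\dg''$, use arc preservation by the morphism, apply satisfaction of $\dg'$ by $\g$, and identify $(\Lv^\he)^\g$ with $\Lv^{\g \cmp \he}$. You merely spell out the componentwise bookkeeping that the paper's one-line argument leaves implicit.
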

 \begin{proof} 
 For every arc  $\arc{\RE}{\Lv} \in \AS_{\dg''}$,  we have 
  $\arc{\RE}{ {\Lv^\he }} \in \AS_{\dg'}$, thus      $\Lv^{\g \cmp \he} \in \rel{\RE}_{\gM}$, whence  
  $\g \cmp \he \arcsat{\gM} \arc{\RE}{\Lv}$.     
 \end{proof}  
 
  	We now use morphisms to introduce  zero    sketches, slices and graphs. 
	
	A  sketch  $\dg \seq \BDr \NS \drs  \AS \EDr$ is   \emph{zero} iff   there exist 
a slice $\SlT \seq \BSl \ud{\SlT} \dps \DLST \ESl$ and a morphism 
$\he : \ud{\SlT} \mor \dg$ such that 
$\arc{\cmpl{\SlT}}{\DLST^\he}$ is an arc in $\AS$. 
A slice $\SlS$ is  \emph{zero} iff its underlying draft $\ud{\SlS}$ is a zero sketch. 
A graph  is  \emph{zero} iff  all its slices are  zero slices. 
The sets of zero drafts,  zero slices and zero graphs are all decidable, since, 
%\fn{
for drafts $\snt{D}'$ and $\snt{D}''$, 
the set $\Mor{\snt{D}''}{\snt{D}'}$ is finite.
%}  

% NOVA NOT \, $\vx > \vv ; \vy > \vw$

\begin{Exmpl}    \label{Exmpl:ZrDrft} 
Consider the following draft $\snt{D} $ and $2$-ary slice $\SlT$:  
%NOVO DESENHO
\vs{- \nvs pt} 
\[ \Ba{ccc} 
\snt{D} \seq \BDr \{ \vu', \vv', \vw' \} \drs 
	\{ \arc{\rs}{\vu' \, \vv'},  \arc{\cmpl{\SlT}}{\vu' \, \vw'}, \arc{\pss}{\vv' \, \vw'}  \} \EDr & \hs{\LVS pt} & 
\SlT \seq \BSl \{ \vu , \vv , \vw  \}  \drs \{ \arc{\rs}{\vu \, \vv},   \arc{\pss}{\vv \, \vw} \} \dps \vu \, \vw \ESl 
\vs{\svs pt} \\ 
\xymatrix@R18pt@C18pt{ 
\vu'\ar[r]^{{\Ds \rs}}\ar[dr]_{{\Ds \cmpl{\SlT} }}&\vv'\ar[d]^{{\Ds \pss }}\\
&\vw'
%\vv'&\ar[l]_{{\Ds \rs}}\vu' \ar[r]^{{\Ds \cmpl{\SlT} }}&\vw'\\
%&\pss
}
& & 
\xymatrix@R18pt@C18pt{ 
\vu^1\ar[r]^{{\Ds \rs}} & \vv\ar[d]^{{\Ds \pss}}  \\
&   \vw^2
} 
\Ea \]
%\vs{- \nvs pt} 
%\[ \Ba{ccc} 
%\snt{D} \seq \BDr \{ \vu', \vv', \vw' \} \drs 
	%\{ \arc{\rs}{\vu' \, \vv'},  \arc{\cmpl{\SlT}}{\vu' \, \vw'}, \arc{\pss}{\vv' \, \vw'}  \} \EDr & \hs{\LVS pt} & 
%\SlT \seq \BSl \{ \vu , \vv , \vw  \}  \drs \{ \arc{\rs}{\vu \, \vv},   \arc{\pss}{\vv \, \vw} \} \dps \vu \, \vw \ESl 
%\vs{\svs pt} \\ 
%\xymatrix@R12pt@C16pt{ 
%& \ar@{<-}[dr]  \rs \ar@{<--}[dl]  & &  \ar@{<-}[dl]  \cmpl{\SlT} \ar@{<--}[dr]  &  \\ 
%\vv' & & \vu' & & \vw'  \\  
 %& &  \ar@{<-}[ull]  \pss \ar@{<--}[urr] & &  }
%& & 
%\xymatrix@R12pt@C16pt{ & \ar@{<-}[dl]  \rs \ar@{<--}[dr]  & &  \ar@{<-}[dl]  \pss \ar@{<--}[dr]  &  \\ 
%\vu^1 & & \vv & & \vw^2 } 
%\Ea \] 
The mapping $\vu \mpt \vu'$, $\vv \mpt \vv'$, $\vw \mpt \vw'$ gives a morphism $\he : \ud{\SlT} \mor \snt{D}$ , 
with $\DLST^\he = \Lag \vu^\he , \vw^\he \Rag = \Lag \vu', \vw' \Rag$. 
%$\he(\DLST) = \Lag \he(\vu), \he(\vw) \Rag = \Lag \vu', \vw' \Rag$. 
Thus, draft $\snt{D}$ is zero. So, slices 
$\BSl \snt{D} \dps \NL  \ESl$, $\BSl \snt{D} \dps \vu'  \ESl$, $\BSl \snt{D} \dps \vv' \, \vw'  \ESl$ and 
$\BSl \snt{D} \dps \vu'  \ESl$, $\BSl \snt{D} \dps \vu'  \vv' \, \vw'  \ESl$ are  zero slices.\fn{The extension of slice $\SlT$ can be described by the formula 
$\exists \vy \, ( \rs(\vu , \vy) \land   \pss(\vy , \vw) )$.} 
\end{Exmpl}   % \label{Exmpl:ZrDrft} 

% NOVA NOT

\begin{Lem} \label{Lem:ZrSktch} 
No assignment can satisfy a  zero sketch. 
\end{Lem}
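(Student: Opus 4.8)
The plan is to argue by contradiction, unwinding the definition of \emph{zero sketch}. Suppose, for contradiction, that some assignment $\g : \NS \rar M$ satisfies the zero sketch $\dg \seq \BDr \NS \drs \AS \EDr$ in some model $\gM$. Since $\dg$ is zero, by definition there exist a slice $\SlT \seq \BSl \ud{\SlT} \dps \DLST \ESl$ and a morphism $\he : \ud{\SlT} \mor \dg$ such that $\arc{\cmpl{\SlT}}{\DLST^\he}$ is an arc in $\AS$.

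First I would observe that, since $\g$ satisfies $\dg$, it satisfies every arc in $\AS$; in particular $\g \arcsat{\gM} \arc{\cmpl{\SlT}}{\DLST^\he}$, which by the definition of arc satisfaction means $(\DLST^\he)^\g \in \rel{\cmpl{\SlT}}_{\gM} = M^\rla \setminus \rel{\SlT}_{\gM}$, where $\rla$ is the arity of $\SlT$. So $(\DLST^\he)^\g \notin \betM{\SlT}$. Next I would apply Lemma~\ref{Lem:Hmtrnsfr}: the composite $\g \cmp \he : \NS_{\ud{\SlT}} \rar M$ is an assignment satisfying the underlying draft $\ud{\SlT}$, i.e. $\g \cmp \he : \ud{\SlT} \rar \gM$. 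Then, by the definition of the extension of a slice, $\DLST^{\g \cmp \he} \in \betM{\SlT}$. The final step is to note that $\DLST^{\g \cmp \he} = (\DLST^\he)^\g$ (composition of value-maps, extended to lists), so we have both $(\DLST^\he)^\g \in \betM{\SlT}$ and $(\DLST^\he)^\g \notin \betM{\SlT}$, a contradiction. Hence no assignment satisfies $\dg$.

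There is no serious obstacle here; the proof is essentially a direct chase through the definitions, and the only point that requires a little care is the bookkeeping identity $\DLST^{\g \cmp \he} = (\DLST^\he)^\g$, which holds because the action of a function on a list is defined componentwise and composition of functions acts on each component in the expected way. Everything else is immediate from Lemma~\ref{Lem:Hmtrnsfr} and the clause $\g \arcsat{\gM} \arc{\cmpl{\RE}}{\Lv}$ iff $\g \not\arcsat{\gM} \arc{\RE}{\Lv}$ noted right after the semantics definitions.
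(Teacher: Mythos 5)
Your proof is correct and follows essentially the same route as the paper: both invoke Lemma~\ref{Lem:Hmtrnsfr} to get $\g \cmp \he : \ud{\SlT} \rar \gM$ and then note that this puts $(\DLST^\he)^\g \seq \DLST^{\g \cmp \he}$ in $\betM{\SlT}$, contradicting satisfaction of the arc $\arc{\cmpl{\SlT}}{\DLST^\he} \in \AS$. Your explicit handling of the identity $\DLST^{\g \cmp \he} = (\DLST^\he)^\g$ is merely the detail the paper leaves implicit.
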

 \begin{proof} 
 By Lemma~\ref{Lem:Hmtrnsfr}, $\g : \dg \rar \gM$ yields $\g \cmp \he :  \ud{\SlT} \rar \gM$, thus  
 	$\g \arcsat{\gM} \arc{\SlT}{\DLST^\he}$ whence $\g \not \arcsat{\gM} \arc{\cmpl{\SlT}}{\DLST^\he}$.    
 \end{proof}   % \label{Lem:ZrSktch} 
 
 \begin{Cor} \label{Lem:ZrSlGr} 
Zero   slices and zero graphs are null. %: they have empty extensions in every model. 
\end{Cor}  
\begin{proof} 
 By %Remark~\ref{Rem:Slsat} (in~\ref{subsec:Sntxsem}) and  
 Lemma~\ref{Lem:ZrSktch}: if  $\betM{\SlS} \neq \emptyset$,  then 
 some assignment  satisfies  $\ud{\SlS}$. 
 \end{proof} %  label{Lem:ZrSlGr} 
 
	We can now clarify the intuition behind using   $0$-ary difference slices (cf.~\ref{subsec:Sntxsem}). 
We know that a formula is satisfiable iff its existential closure is so. 
The latter 
will convert to a $0$-ary (basic) graph, by Proposition~\ref{Prop:Red} 
(in~\ref{subsec:Red}). Now, whether a slice is zero does not hinge on its distinguished list.  
 
	We now examine some categorical constructions: co-limits and pushouts~\cite{Mac_71}. 
	
 	The category of sketches and morphisms has co-limits. 
Given a  diagram of  sketches $\dg_i \seq \BDr \NS_i \drs  \AS_i \EDr$, its  \emph{co-limit} 
can be obtained as expected: 
obtain the co-limit $\NS$ of the sets of names $\NS_i$ and then 
transfer arcs, by %using 
the functions $\cf{i} : \NS_i \rar \NS$, i.~e. 
 $\AS \deq \bigcup_{i \in I} \, {\AS_i}^{\cf{i}}$.     
In particular,  the pushout of drafts gives a draft.  
 
 	We wish to glue a slice $\SlT$ onto  a draft or a slice  via a designated list of names. 
This involves  adding the arcs of $\SlT$ with its distinguished list identified to the  designated list of names.  

% CORR  \hs{\lvs pt} $\vx > \vu$

 	Gluing can be introduced as an amalgamated sum (of drafts). 
  Consider an  $\ara$-ary slice $\SlT \seq \BSl \ud{\SlT} \dps \DLST \ESl$. Given a draft 
 $\snt{D} \seq \BDr \NS \drs \AS \EDr$ and a list  
$\Lw \in \NS^\ara$ of  $\ara$ names, 
the \emph{glued draft} $\ggl{\snt{D}} {\Lw}  {\SlT}$ is the pushout of the  drafts 
$\snt{D} \, \ada \,\StC{\Lw} \deq \BDr \NS \cup \StC{\Lw}  \ \drs \AS \EDr$ and 
$\ud{\SlT}$ over the $\ara$-ary arcless draft $\ud{\AlSL{\ara} } \seq \BDr \StC{ \Lu^{\ara}} \drs \ES \EDr$  
and the natural morphisms $\ha$ and $\hb$  
($\ha: \vu_i \mpt \vw_i$ and $\hb: \vu_i \mpt  \DLST_i$),  as shown 
in Figure~\ref{Fig:PODr}. 
Note that $\hsg(\Lw) = \hta(\DLST)$.  
 \begin{figure}[htb]   
 \[   \xymatrix@R12pt@C20pt{ %1 & 2 & 3 \\ 
 & \snt{D} \, \ada \,\StC{\Lw}  \ar@{-->}[rd] ^{{\Ds \hsg }}  & \\ 
\BDr  \StC{ \Lu^{\ara}} \drs \ES \EDr \ar@{-->}[ru] ^{{\Ds \ha }}  \ar@{-->}[rd] _{{\Ds \hb }} 
	& &  \ggl{\snt{D}} {\Lw}  {\SlT}  \\ 
 &  \ud{\snt{T}}  \ar@{-->}[ru] _{{\Ds \hta }} &  
}  \]  
\caption{Pushout of drafts} \label{Fig:PODr}
\end{figure}  

% CORR 

	Given an $\rla$-ary slice $\SlS  \seq \BSl  \ud{\snt{S}} \dps \DLSS \ESl$, we obtain the  
\emph{glued slice} $\ggl{\SlS} {\Lw}  {\SlT}$ 
by transferring the distinguished list of $\SlS$ to the glued draft $\ggl{\ud{\SlS}} {\Lw}  {\SlT}$: 
 $\ggl{\SlS} {\Lw}  {\SlT} \deq \BSl  \ggl{\ud{\SlS}} {\Lw}  {\SlT} \dps \hsg(\DLSS) \ESl$.\fn{A 
 glued  draft and slice are unique up to isomorphism. 
%NOVO  
They can be made unique by a suitable choice of names. 
 As isomorphic objects have the same behavior, 
we often consider a sketch or a slice up to isomorphism.} 
% NOVO 
We glue a graph by gluing its slices, i.~e. %we set 
 $\ggl{\SlS} {\Lw}  {\GrH}$ is the graph $\Setof{ \ggl{\SlS} {\Lw}  {\SlT} }{\SlT \in \GrH}$. 
We glue onto a  graph by gluing onto its slices,  i.~e. %we set 
 $\ggl{\GrG} {\Lw}  {\GrH} \deq \bigcup_{\SlS \in \GrG} \, \ggl{\SlS} {\Lw}  {\GrH}$. 
%  NOVO

% NOVA NOT \, $\vu > \vu' ; \vx > \vu ; \vv > \vv, ; \vy > \vv  ; \vz > \vv$ 

\begin{Exmpl}  \label{Exmpl:glue}  
Consider the three slices:  
$1$-ary $\SlS \seq \BSl\{  \vu, \vu', \vv', \vv  \}  \drs \{ \arc{\rs}{ \vu \, \vu' }, 
		\arc{\pss}{ \vu' \, \vv' }, \arc{\ts}{ \vv' \, \vv } \}  \dps  \vu   \ESl$ 
as well as  $2$-ary  
$\SlT'  \seq \BSl\{  \vv, \vw \}  \drs \{ \arc{\as}{ \vw \, \vv } \}  \dps  \vw \, \vv  \ESl$ and 
$\SlT''   \seq \BSl\{  \vw \}  \drs \{ \arc{\ps}{ \vw }, \arc{\qs}{ \vw } \}  \dps  \vw \,  \vw  \ESl$.\fn{The extension of slice 
$\SlT''$ can be described by the formula 
$\ps(\vw) \land \qs(\vw) \land \vw \eqs \vw'$.}  
They are represented as follows: %namely:  
%NOVO DESENHO
\vs{- \nvs pt}  
\[  \Ba{ccccc} 
\SlS & \hs{\LVS pt}  & \SlT' &  \hs{\LVS pt}  & \SlT''  \vs{\nvs pt} \\ 
\xymatrix@R11pt@C18pt{ 
 \vu^1\ar[r]^{{\Ds \rs}}    & \vu'\ar[r]^{{\Ds \pss}}   & \vv' \ar[r]^{{\Ds \ts}}  & \vv
 }
&&
\xymatrix@R11pt@C18pt{
 \vw^1 \ar[r]^{{\Ds \as}} &   \vv^2
}&&
\xymatrix@R11pt@C14pt{
&  \vw^{1, 2} &  \\  \ps \ar@{-}[ur]  &  & \ar@{-}[ul]  \qs}
\Ea \] 
%\vs{- \nvs pt}  
%\[  \Ba{ccccc} 
%\SlS & \hs{\svs pt}  & \SlT' &  \hs{\svs pt}  & \SlT''  \vs{\nvs pt} \\ 
%\xymatrix@R11pt@C14pt{ 
%&  \ar@{<-}[dl] \rs  \ar@{<--}[dr] &  & \ar@{<-}[dl] \pss \ar@{<--}[dr]  &  &  \ar@{<-}[dl] \ts \ar@{<--}[dr]  &  \\ 
	% \vu^1  &  & \vu' &  & \vv' &  & \vv} & &
 %\xymatrix@R11pt@C14pt{ 
 %	&  \ar@{<-}[dl] \as  \ar@{<--}[dr] &  \\  \vw^1 &  & \vv^2 }  & & 
%\xymatrix@R11pt@C14pt{ %1 &  2 & 3 \\ 
% &  \vw^{1, 2} &  \\  \ps \ar@{<-}[ur]  &  & \ar@{<-}[ul]  \qs } 
%\Ea \] 

\noindent 
We obtain $1$-ary glued slices as follows: 
%
%NOVO DESENHO  Aqui
\vs{- \nvs pt} 
\[ \Ba{ccc} 
 \ggl{\SlS} {\Lag \vu' , \vv' \Rag}  {\SlT'} \mbox{{\sm $\seq 
	 \BSl\{  \vu, \vu', \vv', \vv \}  \drs 
	  \Lm \{ \Ba{c}  \arc{\rs}{ \vu \, \vu' }, 
		\arc{\pss}{ \vu' \, \vv' }, \arc{\ts}{ \vv' \, \vv },  \\ \arc{\as}{ \vu' \, \vv' } \Ea \Rm  \}  
		 \dps  \vu   \ESl$}} 
&  \hs{\svs pt}  & \ggl{\SlS} {\Lag \vu' , \vv' \Rag}  {\SlT''} \mbox{{\sm $\seq 
	 \BSl \Lm \{ \Ba{c}   \vu, \vv, \\ \vw  \Ea \Rm  \}  \drs 
	 \Lm \{ \Ba{c}  \arc{\rs}{\vu \, \vw}, \arc{\pss}{\vw \, \vw}, \arc{\ts}{\vw \, \vv}, \\ 
	 	 \arc{\ps}{\vw},   \arc{\qs}{\vw}  \Ea \Rm  \}  
	 \dps  \vu \ESl$}} \vs{\lvs pt}  \\  

\xy
(0,0)*+{ \vu^1 }="u1";
(15,0)*+{  \vu'}="ul"; 
(30,0)*+{   \vv' }="vl";
(45,0)*+{ \vv}="v";
{\ar^{{\Ds \rs}}@/_-0pc/"u1";"ul"}; 
{\ar^{{\Ds \pss}}@/_-1pc/"ul";"vl"};
{\ar_{{\Ds \as}}@/^-1pc/"ul";"vl"}; 
{\ar^{{\Ds \ts}}@/_-0pc/"vl";"v"}
\endxy& &
 \xymatrix@R16pt@C18pt{
 \vu^1\ar[r]^{{\Ds \rs}}&\vw \ar@(ul,ur)[]^{{\Ds \pss}}\ar[r]^{{\Ds \ts}} \ar@{-}[dl] \ar@{-}[dr]&\vv\\
 \ps&&\qs
 }
\Ea \] 
\end{Exmpl} % \label{Exmpl:glue} 
%FIM do NOVO DESENHO
%\vs{- \nvs pt} 
%\[ \Ba{ccc} 
 %\ggl{\SlS} {\Lag \vu' , \vv' \Rag}  {\SlT'} \mbox{{\sm $\seq 
%	 \BSl\{  \vu, \vu', \vv', \vv \}  \drs 
%	  \Lm \{ \Ba{c}  \arc{\rs}{ \vu \, \vu' }, 
%		\arc{\pss}{ \vu' \, \vv' }, \arc{\ts}{ \vv' \, \vv },  \\ \arc{\as}{ \vu' \, \vv' } \Ea \Rm  \}  
%		 \dps  \vu   \ESl$}} 
%&  \hs{\svs pt}  & \ggl{\SlS} {\Lag \vu' , \vv' \Rag}  {\SlT''} \mbox{{\sm $\seq 
%	 \BSl \Lm \{ \Ba{c}   \vu, \vv, \\ \vw  \Ea \Rm  \}  \drs 
%	 \Lm \{ \Ba{c}  \arc{\rs}{\vu \, \vw}, \arc{\pss}{\vw \, \vw}, \arc{\ts}{\vw \, \vv}, \\ 
%	 	 \arc{\ps}{\vw},   \arc{\qs}{\vw}  \Ea \Rm  \}  
%	 \dps  \vu \ESl$}} \vs{\lvs pt}  \\  
%\xymatrix@R16pt@C18pt{
%&  \ar@{<-}[dl] \rs  \ar@{<--}[dr] &  & \ar@{<-}[dl] \pss \ar@{<--}[dr]  &  &  \ar@{<-}[dl] \ts \ar@{<--}[dr]  &  \\ 
%	 \vu^1  &  & \vu' &  & \vv' &  & \vv \\ 
%	  &  &  & \ar@{<-}[ul] \as \ar@{<--}[ur] &  &  &   }  & & 
%\xymatrix@R16pt@C18pt{ 
%	 &  \ar@{<-}[dl] \rs \ar@{<--}[dr]  &  \ar@{<-}@/_/[d]  \pss  \ar@{<--}@/^/[d] 
%	 	&  \ar@{<-}[dl] \ts \ar@{<--}[dr]  &  \\  
%	 \vu^1 &  & \vw &  & \vv  \\ 
%	  & \ps \ar@{<-}[ur]  &  &  \ar@{<-}[ul]  \qs  &   }
%\Ea \] 
%\end{Exmpl} % \label{Exmpl:glue} 

% NOVA NOT

	Addition of a slice-arc is equivalent to gluing the slice. 
For instance, with the slices of Example~\ref{Exmpl:glue}: 
%CORR 
$\SlS \ada \arc{\SlT'}{\vu' \, \vv'} \, \eq \, \ggl{\SlS} {\Lag \vu', \vv' \Rag}  {\SlT'}$ and 
$\SlS \ada \arc{\SlT''}{\vu' \, \vv'} \, \eq \, \ggl{\SlS} {\Lag \vu', \vv' \Rag}  {\SlT''}$. 

\begin{Prop} \label{Prp:Arcadgl} 
Given a   slice $\SlS$ and an arc   $\arc{\SlT}{\Lw}$: 
$\SlS \ada \arc{\SlT}{\Lw} \, \eq \, \ggl{\SlS} {\Lw} {\SlT}$.
\end{Prop}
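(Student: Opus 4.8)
The plan is to show that the two $\rla$-ary slices $\SlS \ada \arc{\SlT}{\Lw}$ and $\ggl{\SlS}{\Lw}{\SlT}$ have the same extension in every model, so that they are equivalent in the sense of~\ref{subsec:Sntxsem}. Since both slices carry the same distinguished list (the image of $\DLSS$, which is just $\DLSS$ itself on the $\SlS$-side and $\hsg(\DLSS)$ on the glued side — and these match under the identification of $\SlS$ with a subsketch of the pushout), it suffices to show that the underlying drafts $\ud{\SlS} \ada \arc{\SlT}{\Lw}$ and $\ggl{\ud{\SlS}}{\Lw}{\SlT}$ admit the ``same'' satisfying assignments, i.~e. that an $M$-assignment on the names of one extends/restricts to a satisfying assignment on the names of the other in a way that respects the distinguished list.

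First I would fix a model $\gM$ and analyze $\g \arcsat{\gM} \arc{\SlT}{\Lw}$. By the satisfaction clause for arcs, this holds iff $\Lw^\g \in \rel{\SlT}_{\gM} = \betM{\SlT}$, which by the definition of slice extension holds iff there is an $M$-assignment $\h$ on the names of $\ud{\SlT}$ with $\h : \ud{\SlT} \rar \gM$ and $\DLST^\h = \Lw^\g$. So a satisfying assignment for $\ud{\SlS} \ada \arc{\SlT}{\Lw}$ is exactly a satisfying assignment $\g$ for $\ud{\SlS}$ together with the side condition $\Lw^\g \in \betM{\SlT}$, witnessed by some such $\h$. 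On the other side, $\ggl{\ud{\SlS}}{\Lw}{\SlT}$ is by construction the pushout of $\ud{\SlS}\ada\StC{\Lw}$ and $\ud{\SlT}$ over the arcless draft $\ud{\AlSL{\ara}}$ along $\ha : \vu_i \mpt \vw_i$ and $\hb : \vu_i \mpt \DLST_i$; by the universal property of the co-limit of sketches (and since the co-limit of sets gives the underlying names), an assignment $\k : \ggl{\ud{\SlS}}{\Lw}{\SlT} \rar \gM$ corresponds precisely to a compatible pair: an assignment $\g$ on the $\SlS$-side and an assignment $\h$ on the $\SlT$-side that agree on the shared names, which by the choice of $\ha$ and $\hb$ means exactly $\Lw^\g = \DLST^\h$. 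So I would invoke Lemma~\ref{Lem:Hmtrnsfr} (the co-projections $\hsg$ and $\hta$ are morphisms, so $\k$ satisfies the glued draft iff $\k\cmp\hsg = \g$ satisfies $\ud{\SlS}\ada\StC{\Lw}$ and $\k\cmp\hta = \h$ satisfies $\ud{\SlT}$) to get that satisfying assignments of $\ggl{\ud{\SlS}}{\Lw}{\SlT}$ are exactly compatible pairs $(\g,\h)$ with $\g$ satisfying $\ud{\SlS}$, $\h$ satisfying $\ud{\SlT}$, and $\Lw^\g = \DLST^\h$.

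Putting the two descriptions side by side, one sees that both slices have extension $\Setof{\DLSS^\g}{\g : \ud{\SlS} \rar \gM \text{ and } \Lw^\g \in \betM{\SlT}}$: on the addition side this is immediate from the arc clause, and on the gluing side it follows by projecting the compatible pair $(\g,\h)$ to $\g$ and noting that the existence of a compatible $\h$ is precisely the condition $\Lw^\g \in \betM{\SlT}$. Since $\hsg$ restricted to the names of $\SlS$ is injective on the image, $\DLSS^{\k\cmp\hsg} = \DLSS^\g$, so the distinguished-list values match. Hence $\betM{\SlS \ada \arc{\SlT}{\Lw}} = \betM{\ggl{\SlS}{\Lw}{\SlT}}$ for every $\gM$, which is the claim. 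The graph and ``gluing onto a graph'' cases then follow by taking unions over slices, using that both constructions commute with the relevant unions.

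The main obstacle I expect is purely bookkeeping: making the correspondence between an assignment into the pushout and a compatible pair of assignments fully precise, in particular checking that the names of $\ggl{\ud{\SlS}}{\Lw}{\SlT}$ are (up to the canonical isomorphism) the disjoint union of the names of $\ud{\SlS}$ and the fresh names of $\ud{\SlT}$ with the distinguished names of $\SlT$ glued to $\Lw$, and that under this identification the co-projection morphisms are the obvious inclusions. Once that is nailed down, the semantic equality is a direct unwinding of the definitions of arc satisfaction and slice extension together with Lemmas~\ref{Lem:Hmtrnsfr} and the co-limit construction; no genuinely hard step remains. One should also be slightly careful that gluing is only defined up to isomorphism of sketches, but since isomorphic sketches have the same satisfying assignments (hence isomorphic slices the same extension), this does not affect the equivalence.
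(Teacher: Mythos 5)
Your proposal is correct and follows the same basic strategy as the paper's proof: fix a model $\gM$, show $\betM{\SlS \ada \arc{\SlT}{\Lw}} = \betM{\ggl{\SlS}{\Lw}{\SlT}}$ by a double inclusion, and use Lemma~\ref{Lem:Hmtrnsfr} to pull a satisfying assignment of the glued draft back along the co-projections $\hsg$ and $\hta$. The one place where your route genuinely differs is the converse direction, i.e.\ assembling a satisfying assignment of the pushout from a satisfying assignment $\g$ of $\ud{\SlS}$ together with a witness $\h$ for $\Lw^{\g} \in \betM{\SlT}$. The paper does this by first proving that satisfying assignments into $\gM$ are exactly sketch morphisms into the sketch $\Sk{\gM}$ of the model, so that the pushout's universal property (which speaks of morphisms, not of assignments) applies literally and hands you the assignment $\g$ with $\g \cmp \hsg = \g'$, $\g \cmp \hta = \g''$. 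You instead appeal to the concrete description of the co-limit (names are the pushout of the name sets, arcs are the transferred arcs) and define the assignment directly on the glued names; this is perfectly sound and arguably more elementary, but it is exactly the ``bookkeeping'' you flag at the end, and without either that explicit description or the sketch-of-a-model lemma, the bare universal property does not by itself give the claimed correspondence between assignments into $\gM$ and compatible pairs. Two small remarks: your parenthetical that $\hsg$ is ``injective on the image'' of the names of $\SlS$ is not needed and is false in general (e.g.\ when $\DLST$ has repetitions while $\Lw$ does not, distinct names of $\Lw$ get identified); the identity $\DLSS^{\,\g\cmp\hsg}$ with $\g$ the restricted assignment is a tautology and needs no injectivity. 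Also, the closing sentence about graphs is unnecessary for the statement, which concerns a single slice and a single slice-arc.
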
  
\begin{proof} 
 By Lemma~\ref{Lem:Hmtrnsfr} and the pushout  property .%(see the Appendix).     
 \end{proof}   % \label{Prp:Arcadgl} 

	It is not difficult to translate our graph language to the 
underlying first-order predicate language.
It suffices to express the semantics of  the graph language  (in~\ref{subsec:Sntxsem}) by  
formulas.% (see the Appendix).
	
%\subsection{Concepts and constructions} \label{subsec:Cnstr} 

%\section{Graphical  Language} \label{sec:GrLng} 
%---------------------------------------------------------------------
\section{Graph Calculus} \label{sec:GrClc}  
We now introduce our graph calculus, with conversion  and expansion rules.  
We employ $\rtc{R}$ for the \emph{reflexive-transitive closure} of a binary relation $R$ on a set, 
as usual.   

	Our conversion and expansion rules will transform an expression to an equivalent one. 
Thus, one can apply such a rule in any context. 
For instance, we will have a rule converting $\bot$ to the empty graph   
$\{ \hs{3pt} \}$;  so, we can apply it to convert 
$\cmpl{\bot}$ to   $\cmpl{\{ \hs{3pt} \}}$ and 
$\SlS \ada \arc{\bot}{\NL}$ to $\SlS \ada \arc{\{ \hs{3pt} \}}{\NL}$, for any slice $\SlS$. 
% 02.Aug 
Also, we can identify a singleton graph with its slice  (cf.~\ref{subsec:Sntxsem}): 
if $\SlS \, \red \, \RF$ then $\{ \SlS \} \, \red \, \RF$ and 
if $\RE \, \red \, \SlT$ then $\RE \, \red \, \{ \SlT \}$. 
% 02.Aug

%---------------------------------------------------------------------
\subsection{Conversion} \label{subsec:Red}  
We now introduce the basic objects and the conversion rules. 

	The \emph{basic} objects are defined (by mutual recursion) as follows. 
%\Bd 
%\item[{\sm ($\RE$)}] 
The basic expressions  are   
the %non-logical 
predicate symbols, other than $\eqs$,  and    
 $\cmpl{\SlT}$,  where $\SlT$ is a basic slice (see below). 
% \item[{\sm ($\snt{a}$)}] 
An arc  $\arc{\RE}{\Lv}$ is basic iff  $\RE$ is a basic expression.   
%\item[{\sm ($\dg$)}] 
A sketch is basic iff all its arcs are basic. 
%\item[{\sm ($\snt{S}$)}]  
A slice is basic iff its  underlying draft is a basic    sketch.  
 %\item[{\sm ($\snt{G}$)}]  
 A graph is basic iff  its slices are all basic.  
%\Ed  
For instance, the drafts 
$\snt{D}'$ and $\snt{D}''$,   of  Example~\ref{Exmpl:mor},  
and $\snt{D}$, of Example~\ref{Exmpl:ZrDrft},   
(in~\ref{subsec:Cnstr})  are basic, whereas those in Examples~\ref{Exmpl:cnj},~\ref{Exmpl:notcnq} 
and~\ref{Exmpl:eq} 
%the examples of (in Section~\ref{sec:Motv})  
are not basic. 

	The conversion rules will transform an expression to an equivalent basic graph. 
%The examples in Section~\ref{sec:Motv}  highlight the main steps, omitting some intermediate ones. 

% 30.jul 	
	The formula rules will come from some equivalences between formulas and expressions  
We  now illustrate some of these equivalences. 
For a $1$-ary predicate $\ps$, formula $\ps(\vv)$ is equivalent to the $1$-ary slice 
$\BSl \{ \vv \}  \drs \{\arc{\ps}{\vv} \} \dps \vv \ESl$, thus  
$\neg \ps(\vv)$ is equivalent to the $1$-ary expression $\cmpl{\ps(\vv)}$. 
Now, consider formulas $\rs( \vu, \vv )$ and $\pss( \vv , \vw )$. 
For the conjunction $\rs( \vu, \vv ) \land \pss( \vv , \vw )$, we have a $3$-ary slice $\SlS$ equivalent to  it, 
%such that $\rs( \vu, \vv ) \land \pss( \vv , \vw )$, % \eq \SlS$, 
namely  the slice 
$\SlS \seq \BSl \NS \drs \AS \dps  \vu \, \vv \, \vw  \ESl$, with sets   
$\NS \seq \{ \vu, \vv, \vw \}$ and $\AS \seq \{ \arc{\rs( \vu, \vv )}{  \vu \, \vv } , \arc{\pss( \vv, \vw )}{ \vv \, \vw } \}$. 
%The conjunction $\rs( \vu, \vv ) \land \pss( \vv , \vw )$ is equivalent to the  $3$-ary slice 
%$\SlS = \BSl \NS \drs \AS \dps  \vu \, \vv \, \vw  \ESl$, with sets   
%$\NS = \{ \vu, \vv, \vw \}$ and $\AS = \{ \arc{\rs( \vu, \vv )}{  \vu \, \vv } , \arc{\pss( \vv, \vw )}{ \vv \, \vw } \}$, 
%which we can represent as:      
%\vs{- \mvs pt}
%$$\xymatrix@R13pt@C16pt{ 
%	& \ar@{<-}[dl]   \rs(\vu, \vv)   \ar@{<--}[dr]& & \ar@{<-}[dl]  \pss(\vv, \vw)  \ar@{<--}[dr] &  \\ 
%	\vu^1 &  & \vv^2 &  & \vw^3}$$ 
For the disjunction $\rs( \vu, \vv ) \lor \pss( \vv , \vw )$ we have a $3$-ary graph $\GrG$ such that  
$\rs( \vu, \vv ) \lor \pss( \vv , \vw ) \eq \GrG$, namely  the graph $\GrG$ with $2$  slices:  
%  is equivalent to the $3$-ary graph with $2$  slices, namely
%Now,  the disjunction $\rs( \vu, \vv ) \lor \pss( \vv , \vw )$ is equivalent to the $3$-ary graph with $2$  slices, namely  \\ 
$\BSl \{ \vu, \vv, \vw \} \drs \{ \arc{\rs( \vu, \vv )}{  \vu \, \vv } \} \dps  \vu \, \vv \, \vw  \ESl$ and 
$\BSl \{ \vu, \vv, \vw \} \drs \{ \arc{\pss( \vv, \vw )}{ \vv \, \vw } \} \dps  \vu \, \vv \, \vw  \ESl$. 
Also, as  the conditional formula $\rs( \vu, \vv ) \impl \pss( \vv , \vw )$ is logically equivalent to 
$\neg \rs( \vu, \vv ) \lor \pss( \vv , \vw )$, it is   equivalent to the $3$-ary graph   
$\{ \BSl \{ \vu, \vv, \vw \} \drs \{ \arc{\cmpl{\rs( \vu, \vv )}}{  \vu \, \vv } \} \dps  \vu \, \vv \, \vw  \ESl ,  
\BSl \{ \vu, \vv, \vw \} \drs \{ \arc{\pss( \vv, \vw )}{ \vv \, \vw } \} \dps  \vu \, \vv \, \vw  \ESl \}$. 
The existential formula $\exists \vy \, \ts( \vu, \vy , \vw )$ is equivalent to the  $2$-ary slice 
$\BSl \{ \vu, \vv, \vw \} \drs \{ \arc{\ts( \vu, \vv , \vw )}{  \vu \, \vv \,  \vw } \} \dps  \vu \,  \vw  \ESl$. 
Also, as  the universal formula $\forall \vy \, \ts( \vu, \vy , \vw )$ is logically equivalent to 
$\neg \exists \vy \neg \ts( \vu, \vy , \vw )$, 
it is   equivalent to the $2$-ary expression 
$\cmpl{\BSl \{ \vu, \vv, \vw \} \drs \{ \arc{\cmpl{\ts( \vu, \vv , \vw )}}{  \vu \, \vv \,  \vw } \} \dps  \vu \,  \vw  \ESl}$.

	The \emph{formula rules} are the following  $8$   %$7$ 
	conversion rules  eliminating formulas. 
\Bd 

\item[{\sm ($\AtRl$)}]  
 For an atomic formula $\ps(\Lw)$: 
$ \ps(\Lw) \, \red \,   \BSl \StC{\Lw} \drs \{\arc{\ps}{\Lw} \} \dps \Lw \ESl$. 
So, we replace $\vu \eqs \vv$, $\rs( \vu, \vv )$ and $\ts( \vu, \vv, \vv )$ %, respectively,  
by  
%NOVO DESENHO
%$\Ba{cccc} 
%\xymatrix@R12pt@C20pt{\vu^1 \ar[r]^{{\Ds \eqs }}& \vv^2
%} &
%\xymatrix@R15pt@C20pt{ \vu^1 \ar[r]^{{\Ds \rs }}& \vv^2
%}  & 
% \Ba{c}  \mbox{ and } \Ea & 
%\xymatrix@R8pt@C8pt{
%\vu^1\ \ar[rr]^{{\Ds \,\ts}}&&\vv^3 \\
%&\vv^2  \ar[u]&
%}
%\Ea$
%FIM NOVO DESENHO
%Outro desenho
%$\Ba{cccc} 
%\xymatrix@R12pt@C20pt{\vu^1 \ar[r]^{{\Ds \eqs }}& \vv^2
%} &
%\xymatrix@R15pt@C20pt{ \vu^1 \ar[r]^{{\Ds \rs }}& \vv^2
%}  & 
% \Ba{c}  \mbox{ and } \Ea & 
%\xymatrix@R12pt@C16pt{ &  \ar@{<-}[dl] \ts  \ar@{<--}@/_1pc/[dr]  \ar@{<.}@/^1pc/[dr] &  \\ 
%		\vu^1 &  & \vv^{2, 3}} 
%\Ea$
%Fim do outro desenho
%Ainda Outro desenho
%$\Ba{cccc} 
%\xymatrix@R12pt@C20pt{\vu^1 \ar[r]^{{\Ds \eqs }}& \vv^2
%} &
%\xymatrix@R15pt@C20pt{ \vu^1 \ar[r]^{{\Ds \rs }}& \vv^2
%}  & 
% \Ba{c}  \mbox{ and } \Ea & 
%\xymatrix@R12pt@C16pt{ \vu^1&  \ar@{<-}[l] \ts  \ar@{<--}@/_1pc/[rr]  \ar@{<.}@/^1pc/[rr] && \vv^{2, 3} \\ 
		%\vu^1 &  & \vv^{2, 3}
%		} 
%\Ea$
%Fim do ainda  outro desenho
% antigo desenho
$\Ba{cccc} 
\xymatrix@R12pt@C10pt{ &  \ar@{<-}[dl] \eqs  \ar@{<--}[dr] &  \\ \vu^1 &  & \vv^2} &
\xymatrix@R12pt@C10pt{ &  \ar@{<-}[dl] \rs  \ar@{<--}[dr] &  \\ \vu^1 &  & \vv^2}  & 
 \Ba{c} \\ \\ \mbox{ and } \Ea & 
\xymatrix@R12pt@C16pt{ &  \ar@{<-}[dl] \ts  \ar@{<--}@/_1pc/[dr]  \ar@{<.}@/^1pc/[dr] &  \\ 
		\vu^1 &  & \vv^{2, 3}} 
\Ea$

\item[{\sm ($\bot$)}] % 28.jul   
$\bot \, \red \, \EG$,  i.~e.  we replace $0$-ary  formula $\bot$ by the empty graph. 
% 28.jul

\item[{\sm ($\neg$)}]  $\neg \Ff \, \red \, \cmpl{\Ff}$. 
%NOVO 
So, we replace $\neg ( \rs( \vu, \vv ) \impl \pss( \vv , \vw ) )$ by the $3$-ary expression 
$\cmpl{\rs( \vu, \vv ) \rar \pss( \vv , \vw )}$. 

\item[{\sm $\bullet$}] 
Given formulas $\Fp$ and $\Ft$, with $\Su \deq \SN{\Fp}$ and $\Sv \deq \SN{\Ft}$, 
set $\Sw \deq \Su \cup \Sv$. 
% 01.Agu

\item[{\sm ($\land$)}] %  01.Aug 
$\Fp \land \Ft \, \red \, \BSl \Sw \drs \{  \arc{\Fp}{\Lst{\Su}} ,  \arc{\Ft}{\Lst{\Sv}}  \} \dps \Lst{\Sw} \ESl$.  
Thus, we can replace formula 
$\rs( \vu, \vv ) \land \pss( \vv , \vv )$ by the $2$-ary slice \\ 
$\BSl \{ \vu, \vv \} \drs \{ \arc{\rs( \vu, \vv )}{  \vu \, \vv } , \arc{\pss( \vv, \vv )}{ \vv } \} \dps  \vu \, \vv   \ESl$, 
which we can represent as:      
%NOVO DESENHO
%\vs{- \nvs pt}
%$$\xymatrix@R13pt@C16pt{ \vu^1\ \ar[rr]^{{\Ds \,\rs(\vu,\vv)}}& & \,\, \vv^2\ar@(ul,ur)[]^{{\Ds \pss(\vv,\vv)}} &
%	 }$$  
%FIM NOVO DESENHO
\vs{- \nvs pt}
$$\xymatrix@R13pt@C16pt{ 
	& \ar@{<-}[dl]   \rs(\vu, \vv)   \ar@{<--}[dr] & &  \ar@{<-}@/^-1pc/[dl]  \pss(\vv, \vv) \ar@{<--}@/_-1pc/[dl]   \\ 
	\vu^1 &  & \vv^2 &  }$$

%FIM NOVO DESENHO

%\vs{- \nvs pt}    
%\[ \dashbox{\dsz}(234, 54)[]{$\Ba{ccc}  
%\fbox{ \xymatrix@R13pt@C16pt{ 
	% & \ar@{<-}[dl]   \rs(\vu, \vv)   \ar@{<--}[dr] &   \\ 
	%\vu^1 &  & \vv^2 } } & \hs{\mvs pt}  & 
%\fbox{ \xymatrix@R13pt@C16pt{ 
%	&   &   \ar@{<-}@/^-1pc/[dl]  \pss(\vv, \vv) \ar@{<--}@/_-1pc/[dl]   \\ 
%	\vu^1 &   \vv^2 & } }
%\Ea$} \]}
% 02.Aug 

\item[{\sm ($\lor$)}] 
$\Fp \lor \Ft \, \red \, \{ \BSl \Sw \drs \{  \arc{\Fp}{\Lst{\Su}}  \} \dps \Lst{\Sw} \ESl \, , \, 
\BSl \Sw \drs \{  \arc{\Ft}{\Lst{\Sv}}  \} \dps \Lst{\Sw} \ESl \}$.    
% 02.Aug  
So,  we can replace  formula 
$\rs( \vu, \vv ) \lor \pss( \vv , \vv )$ by the $2$-ary graph 
$\Lm \{ \Ba{c} 
	\BSl \{ \vu, \vv \} \drs \{  \arc{\rs( \vu, \vv)}{  \vu \, \vv }  \} \dps  \vu \, \vv  \ESl \, , \\ 
	\BSl  \{ \vu, \vv \} \drs \arc{\pss( \vv, \vv)}{ \vv }  \} \dps   \vu \, \vv \  \ESl  
	\Ea \Rm  \} $.\fn{This graph  can be represented  as follows: 
%NOVO DESENHO
\vs{- \nvs pt}    
\[ \dashbox{\dsz}(234,54)[]{$\Ba{ccc}
%\vs{- \nvs pt}     
\fbox{\xymatrix@R13pt@C16pt{ 
&&\\
\vu^1 \ar[rr]^{{\Ds \,\rs(\vu,\vv)}}&&\vv^2
}
 }& \hs{\mvs pt}  & 
 \fbox{ \xymatrix@R10pt@C13pt{ 
	&   &   \ar@{<-}@/^-1pc/[dl]  \pss(\vv, \vv) \ar@{<--}@/_-1pc/[dl]   \\ 
	\vu^1 &   \vv^2 & } }

\Ea$
}\]}

\item[{\sm ($\CndRl$)}] % 01.Aug  
$\Fp \impl \Ft \, \red \,\BSl \Sw \drs \{  \arc{\cmpl{\Fp}}{\Lst{\Su}} ,  \arc{\Ft}{\Lst{\Sv}}  \} \dps  \Lst{\Sw}  \ESl$. 
% 01.Aug  
% 02.Aug  
So,  we can replace  formula 
$\ps( \vu) \impl \rs( \vv , \vw )$ by the $3$-ary graph 
$ \{ \BSl \{ \vu, \vv , \vw \} \drs \{  \arc{\cmpl{\ps( \vu)}}{  \vu  }  \} \dps  \vu \, \vv  \, \vw  \ESl \, , 
	\BSl  \{ \vu, \vv , \vw \} \drs \arc{\rs( \vv , \vw)}{ \vv \, \vw }  \} \dps  \vu \, \vv  \, \vw   \ESl   \} $.  
 % 02.Aug 

\item[{\sm $\bullet$}] % 01.Aug  
Given a formula $\Ff$  and a set $\Sv$ of names, set $\Su \deq \Sw \setminus \Sv$,  where $\Sw \deq \SN{\Ff}$. 
% 01.Aug  

\item[{\sm ($\EQRl$)}] % 01.Aug   
For formula $\EQRl \vx \, \Ff \Brn \Lv \rpm \vx \Ern$, 
$\EQRl \vx \, \Ff \Brn \Lv \rpm \vx \Ern  \, \red \,  \BSl \Sw  \drs  \{\arc{\Ff}{\Lst{\Sw}} \}   \dps  \Lst{\Su}  \ESl$.  
% 01.Aug   
Thus, we can replace 
$\exists \vy \exists \vz \,  \ts( \vu, \vy , \vz )$ by  the single-arc $1$-ary slice   
$\BSl \{ \vu, \vv , \vw \}  \drs  \{\arc{\ts( \vu, \vv , \vw )}{ \vu \, \vv \, \vw } \}  \dps  \vu \ESl$, %i.~e. 
which we can represent as: 
%NOVO DESENHO
%\vs{- \mvs pt} 
%\[\xymatrix@R12pt@C8pt{
%\vu^1 \ar[rrr]^{{\Ds \,\ts(\vu,\vv,\vw)}}&&&\vw\\
%&\vv \ar[u]&
 %}
 %\]
%FIM DO NOVO DESENHO

\vs{- \mvs pt} 
\[ \xymatrix@R19pt@C18pt{ 
&  \ar@{<-}[dl] \ts(\vu, \vv, \vw)  \ar@{<--}[d]  \ar@{<.}[dr] &  \\ \vu^1 & \vv & \vw} \]  

\item[{\sm ($\UQRl$)}] %  01.Aug   
 For formula $\UQRl \vx \, \Ff \Brn \Lv \rpm \vx \Ern$, 
$\UQRl \vx \, \Ff \Brn \Lv \rpm \vx \Ern  \, \red \, 
	\cmpl{\BSl \Sw  \drs  \{\arc{\cmpl{\Ff}}{\Lst{\Sw}} \}   \dps  \Lst{\Su}  \ESl}$. 
% 01.Aug   
So, can we replace 
$\forall \vy \forall \vz \,  \ts( \vu, \vy , \vz )$ by  the  $1$-ary expression    
$\cmpl{\BSl \{ \vu, \vv , \vw \}  \drs  \{\arc{\cmpl{\ts( \vu, \vv , \vw )}}{ \vu \, \vv \, \vw } \}  \dps  \vu \ESl}$, %i.~e. 
which we can represent as:  
%NOVO DESENHO
%\vs{- \mvs pt} 
%\[ \cmpl{\fbox{\xymatrix@R19pt@C18pt{\vu^1 \ar[rrr]^{{\Ds \, \cmpl{\ts(\vu,\vv,\vw)}}}&&&\vw\\
%&\vv \ar[u]& 
%}}} \]  
%FIM NOVO DESENHO

\vs{- \mvs pt} 
\[ \cmpl{\fbox{\xymatrix@R19pt@C18pt{ 
&  \ar@{<-}[dl] \cmpl{\ts(\vu, \vv, \vw)}  \ar@{<--}[d]  \ar@{<.}[dr] &  \\ \vu^1 & \vv & \vw}}} \]  

\Ed	

% 31.jul

\begin{Exmpl}  \label{Exmpl:DtQntCnv}  
Consider a formula $\Ff$ with list of names $\Lag \vu , \vv , \vw \Rag$, noted $\Ff( \vu , \vv , \vw )$. \\ 
For the formula $\exists \vy \forall \vz \, \Ff( \vu , \vy , \vz )$, we have the conversions: 
\vs{- \mvs pt} 
\[  \exists \vy \forall \vz \, \Ff( \vu , \vy , \vz ) \, \cnvu{\EQRl} \, 
 \BSl \{ \vu , \vv \}   \drs  \{\frarc{ \forall \vz \, \Ff( \vu , \vv , \vz ) }{ \vu \, \vv } \}   \dps \vu  \ESl  \, \cnvu{\UQRl} \, 
  \BSl \{ \vu , \vv \}   \drs  
  	\{ \frarc{  \cmpl{\BSl \{  \vu , \vv , \vw \}   \drs  
		\{ \arc{\cmpl{\Ff}}{\vu \, \vv \, \vw} \}   \dps \vu \, \vv  \ESl}  }{ \vu \, \vv } \}   \dps \vu  \ESl \] 
For the formula $\forall \vy \exists \vz \, \Ff( \vu , \vy , \vz )$, we have the conversions: 
\vs{- \mvs pt}
\[   \forall \vy \exists \vz \, \Ff( \vu , \vy , \vz )  \, \cnvu{\UQRl} \,  
 \cmpl{\BSl \{  \vu , \vv \}   \drs  \{ \frarc{\cmpl{ \exists \vz \, \Ff( \vu , \vy , \vz ) }}{\vu \, \vv } \}   \dps \vu  \ESl}  
 \, \cnvu{\EQRl} \, 
  \cmpl{\BSl \{  \vu , \vv \}   \drs 
  	 \{ \frarc{\cmpl{  \BSl \{ \vu , \vv , \vw  \}  \drs 
	 	 \{\arc{\Ff}{ \vu \, \vv \, \vw} \}   \dps \vu \, \vv  \ESl  }}{\vu \, \vv } \}   \dps \vu  \ESl} \] 
\end{Exmpl}   % \label{Exmpl:DtQntCnv}  

% 31.jul

	By applying  the $8$    formula rules in any context, one can transform an expression to an equivalent expression without connectives or quantifiers. 
 
 % 28.jul 
 
	 The \emph{equality rule} is the following conversion rule, eliminating expression $\eqs$. 
\Bd 
\item[{\sm ($\eqs$)}]   
$\eqs \, \red \, \BSl \{ \vu \} \drs \ES \dps \Lag \vu, \vu \Rag \ESl$, where $\vu \in \IN$. 
So, we can replace  slice  % 02.Aug 
$\BSl \{ \vu , \vv, \vw \} \drs
	\{ \arc{\rs}{\vu \, \vv} , \arc{\eqs}{ \vv \, \vw } , \arc{\pss}{\vu \, \vw} \} \dps  \vv \, \vw \ESl$ by the slice 
 $\BSl \{ \vv , \vw \} \drs
 	\{ \arc{\rs}{\vu \, \vv} , \frarc{ \BSl \{ \vu \} \drs \ES \dps \Lag \vu , \vu \Rag \ESl }{ \vv \, \vw } , \arc{\pss}{\vu \, \vw} \} 
 	\dps  \vv \, \vw \ESl$.    
% 02.Aug  
\Ed
 
%  28.jul 

	By using these $9$ rules, one can eliminate logical symbols and predicates, but  arcs whose expressions are slices or graphs, perhaps complemented, may appear. %\fn{
For instance, this happens with  
$\vv \eqs \vw$ and  
$\exists \vy ( \rs( \vu, \vy ) \land \pss( \vy , \vw ))$. %}   
The following  rules will address these cases. 

	The \emph{complementation  rules} are the following $2$ conversion rules, moving 
$\cmpl{{}^{\hs{\nvs pt}}}$ inside.
\Bd 
\item[{\sm ($\StrFlCUn$)}]   For  an $\rla$-ary graph $\GrH$: 
$\cmpl{\GrH} \, \red \, 
	\BSl \StC{\Lw} \drs \Setof{\arc{\cmpl{\SlT}}{\Lw}}{\SlT \in \GrH} \dps \Lw \ESl$,  
	where $\Lw$ is a list of $\rla$ distinct names.   
So, we can  replace the  complemented $1$-ary graph $\cmpl{\{ \SlS , \SlT  \}}$ by the  slice 
	$\BSl \{ \vv \}  \drs  \{ \arc{\cmpl{\SlS}}{\vv} , \arc{\cmpl{\SlT}}{\vv} \} \dps \vv \ESl$. 
\item[{\sm ($\dc$)}]  $\cmpl{\cmpl{\RE}} \, \red \, \RE$, i.~e. eliminate double complementation. 
\Ed 

	By applying  these  $2$ complementation rules in any context, one can eliminate  arcs whose expressions are complemented graphs.

	The \emph{structural  rules} are the following $3$ conversion rules. 
\Bd 

\item[{\sm ($\StrGrRl$)}]  $\SlS \, \ada \,  \arc{\GrH}{\Lv}  \, \red \, 
	 \Setof{\SlS \, \ada \,  \arc{\SlT}{\Lv}}{ \SlT \in \GrH}$,  i.~e.  
	 replace addition of graph arc by alternative addition of its slice arcs. 
	 So, we replace slice  $\SlS \, \ada \,  \arc{\{ \SlT', \SlT'' \} }{\vu} $ by the graph 
	  $\{ \SlS \, \ada \,  \arc{\SlT' }{\vu}, \SlS \, \ada \,  \arc{\SlT'' }{\vu} \}$. 

\item[{\sm ($\StrSlRl$)}]  $\SlS \, \ada \,  \arc{\SlT}{\Lv}  \, \red \,  \ggl{\SlS} {\Lv} {\SlT}$,  i.~e.  
	replace addition of slice arc by glued slice. 
	
\item[{\sm ($\StrFlLu$)}]  For an $\rla$-ary expression $\RE$: % 31.jul
$\RE \, \red \,  \BSl \StC{\Lw} \drs \{ \arc{\RE}{\Lw} \}  \dps \Lw  \ESl$, 
where $\Lw$ is a list of $\rla$ distinct names. %  31.jul  
So, for $\rs \in \PS_2$, we can replace $2$-ary expression  
$\rs$ by the $2$-ary slice $\BSl \{  \vu_1 , \vu_2 \}  \drs \{  \arc{\rs}{\vu_1 \, \vu_2} \} \dps \vu_1 \, \vu_2 \ESl$. 

\Ed 

	By  means of     rules ($\StrGrRl$) and  ($\StrSlRl$), one can eliminate  arcs 
whose expressions are graphs or slices.
Rule ($\StrFlLu$) converts expressions to slices and serves to eliminate $\cmpl{\ps}$: 
$\cmpl{\ps} \, \cnvu{\StrFlLu} \, \cmpl{\BSl \StC{\Lu^\rla} \drs \{  \arc{\ps}{\Lu^\rla} \} \dps \Lu^\rla \ESl}$, 
for $\ps \in \PS_\rla$.  

% 31.jul

\begin{Exmpl}  \label{Exmpl:DtQntRed}  
Consider the formula  $\rs( \vv , \vw )$. We proceed much as in Example~\ref{Exmpl:DtQntCnv}.%\fn{Example~\ref{Exmpl:CBS} (in the Appendix) shows the conversion of  formula 
%$\exists \vy \forall \vz \, \ts( \vu , \vy , \vz )$ to a basic slice, displaying the 
%representation of the steps.}
\\  
\smallskip
Formula $\exists \vy \forall \vz \, \rs( \vy , \vz )$ converts to the $0$-ary slice 
$\SlS \seq \BSl \{ \vv \}   \drs  \{ \frarc{  \cmpl{\BSl \{  \vv , \vw \}   \drs  
		\{ \arc{\cmpl{\rs(\vv , \vw)}}{ \vv \, \vw} \}   \dps \vv  \ESl}  }{ \vv } \}   \dps \NL  \ESl$. \\  
This slice $\SlS$ is not basic, but it can be   converted to a basic slice by ($\AtRl$)  as follows: 
%02.Aug 
\vs{- \nvs pt} 
\[  \BSl \{ \vv \}   \drs  \{ \frarc{  \cmpl{\BSl \{  \vv , \vw \}   \drs  
		\{ \frarc{\cmpl{\rs(\vv , \vw)}}{ \vv \, \vw} \}   \dps \vv  \ESl}  }{ \vv } \}   \dps \NL  \ESl   \, \cnvu{\AtRl} \, 
\BSl \{ \vv \}   \drs  \{ \frarc{  \cmpl{\BSl \{  \vv , \vw \}   \drs  
		\{ \frarc{ \cmpl{ \BSl \{ \vv , \vw  \}  \drs 
		\{ \frarc{\rs}{\vv \, \vw} \} \dps \vv \, \vw \ESl } }{ \vv \, \vw} \}   \dps \vv  \ESl}  }{ \vv } \}   \dps \NL  \ESl \]
%\[ \BSl \{ \vv \}   \drs  \{ \frarc{  \cmpl{\BSl \{  \vv , \vw \}   \drs  
%		\{ \frarc{  \BSl \{ \vv , \vw  \}  \drs 
%		\{ \frarc{\cmpl{\rs}}{\vv \, \vw} \} \dps \vv \, \vw \ESl }{ \vv \, \vw} \}   \dps \vv  \ESl}  }{ \vv } \}   \dps \NL  \ESl
%  \, \cnvu{\StrFlLu} \, \]  
%\[ \BSl \{ \vv \}   \drs  \{ \frarc{  \cmpl{\BSl \{  \vv , \vw \}   \drs  
%		\{ \frarc{  \BSl \{ \vv , \vw  \}  \drs 
%		\{ \frarc{  \cmpl{\BSl \{  \vu_1 , \vu_2 \}  \drs \{  \arc{\rs}{\vu_1 \, \vu_2} \} \dps \vu_1 \, \vu_2 \ESl}  }{\vv \, \vw} \} \dps \vv \, \vw \ESl }{ \vv \, \vw} \}   \dps \vv  \ESl}  }{ \vv } \}   \dps \NL  \ESl  \]   
\smallskip 
Formula $\forall \vy \exists \vz \, \rs( \vy , \vz )$ converts to the $0$-ary expression 
$\RE \seq \cmpl{\BSl \{  \vv \}   \drs \{ \frarc{\cmpl{  \BSl \{ \vv , \vw  \}  \drs 
	 	 \{\arc{ \rs(\vv , \vw) }{ \vv \, \vw} \}   \dps  \vv  \ESl  }}{ \vv } \}   \dps \NL  \ESl}$. \\ 
This expression $\RE$ is not basic, but it can be   converted to a basic expression $\RF$ by ($\AtRl$) as follows: 
\vs{- \nvs pt}
\[  \cmpl{\BSl \{  \vv \}   \drs  \{ \frarc{\cmpl{  \BSl \{ \vv , \vw  \}  \drs 
	 	 \{\arc{ \rs(\vv , \vw) }{ \vv \, \vw} \}   \dps  \vv  \ESl  }}{ \vv } \}   \dps \NL  \ESl}   \, \cnvu{\AtRl} \, 
\cmpl{\BSl \{  \vv \}   \drs  \{ \frarc{\cmpl{  \BSl \{ \vv , \vw  \}  \drs 
	 	 \{\arc{ \BSl \{ \vv , \vw \}  \drs \{\frarc{\rs}{\vv \, \vw} \} \dps \vv \, \vw \ESl }{ \vv \, \vw} \}   \dps  \vv  \ESl  }}{ \vv } \}   \dps \NL  \ESl} \] 
 Expression $\RF$ can be converted to a basic $0$-ary  slice by ($\StrFlLu$). 
 \end{Exmpl}  % \label{Exmpl:DtQntRed} 
% 31.jul

% NOVO 

	Rule   ($\StrSlRl$) gives some useful derived rules about arc addition, which we can use to % which we can 
	shorten conversions (such shortenings were used in the examples of Section~\ref{sec:Motv}). %For instance, we 
	We can replace addition of: 
a  graph arc  by gluing the graph (($\DrStrGGRl$): $\SlS \, \ada \,  \arc{\GrH}{\Lv}  \, \rtc{\red} \,   \ggl{\SlS}{\Lv}{\GrH}$), 
a  complemented-graph arc  by addition of parallel complemented-slice arcs 
($\SlS \, \ada \,  \arc{\cmpl{\GrH}}{\Lv}  \, \rtc{\red} \, \SlS \, \ada \, \Setof{ \arc{\cmpl{\SlT}}{\Lv}}{ \SlT \in \GrH}$) and 
an equality arc by node renaming ($\SlS \, \ada \, \arc{\eqs}{\vu \, \vv} \, \rtc{\red} \, \SlS \Brn \vu \rpm \vv \Ern$, 
%or 
$\SlS \, \ada \, \arc{\eqs}{\vu \, \vv} \, \rtc{\red} \, \SlS \Brn \vv \rpm \vu \Ern$). 
%($\SlS \, \ada \, \arc{\eqs}{\vu \, \vv} \, \rtc{\red} \, \SlS \Brn \vu \rpm \vw \Ern$, 
%for $\vw = \vu$ or $\vw = \vv$).

	We can also replace $\rla$ 
conjunctions and disjunctions by slices and graphs, respectively.\fn{For instance, 
with $3$ formulas, we have   
$\rs(\vu, \vv) \land \pss(\vv, \vw)  \land \ps( \vw) \, \rtc{\red} \, 
\BSl \{ \vu , \vv , \vw \}  \drs 
	\{  \arc{\rs(\vu, \vv)}{\vu \, \vv} ,  \arc{\pss(\vv, \vw)}{\vv \, \vw} ,  \arc{\ps(\vw)}{\vw}   \} \dps \vu \, \vv \, \vw \ESl$ and 
$\rs(\vu, \vv) \lor \pss(\vv, \vw)  \lor \ps( \vw) \, \rtc{\red} \, 
\{  \BSl \{ \vu, \vv, \vw \} \drs \{  \arc{\rs( \vu, \vv)}{  \vu \, \vv }  \} \dps  \vu \, \vv \, \vw  \ESl \, , \, 
	\BSl  \{ \vu, \vv, \vw \} \drs \arc{\pss( \vv, \vw)}{ \vv \, \vw }  \} \dps   \vu \, \vv \, \vw  \ESl  \, \,     
	\BSl \{ \vu, \vv, \vw \} \drs \{  \arc{\ps(\vw)}{\vw}  \} \dps  \vu \, \vv \, \vw  \ESl  \}$.).}

% NOVA NOT  

\begin{Exmpl}  \label{Exmpl:Unstsf} 
Consider the formula 
$\pss(\vv', \vw') \land \exists \vx [ \rs(\vv', \vx) \land \neg \exists \vy ( \rs(\vx, \vy) \land \pss(\vy, \vw' )) ] $. 
This expression $\RE$ can be converted to the $2$-ary slice 
$\BSl \snt{D} \dps \vv' \, \vw' \ESl$, where  $\snt{D}$ is the draft of Example~\ref{Exmpl:ZrDrft} 
(in~\ref{subsec:Cnstr}). %So, formula $\RE$ converts to a zero slice.
\end{Exmpl}  % \label{Exmpl:Unstsf} 

% NOVA NOT 

	We can convert expressions in a modular way. 
	
\begin{Lem}  \label{Lem:ModRed} 
 If $\SlS \, \rtc{\red} \, \GrG$ and $\RE \, \rtc{\red} \, \GrH$, then 
%  NOVO  
$\SlS \, \ada \,  \arc{\RE}{\Lv} \, \rtc{\red} \, \ggl{\GrG} {\Lv}  {\GrH}$.  % NOVO 
\end{Lem}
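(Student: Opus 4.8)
The plan is to peel off the conversion of the slice-arc $\arc{\RE}{\Lv}$ in two stages, mirroring the definition of gluing: first rewrite $\RE$ inside the arc, then rewrite $\SlS$ around the arc, and finally collapse the resulting graph-arc into a gluing by the structural rules. Concretely, I would argue as follows. Since conversion rules may be applied in any context (stated at the start of Section~\ref{sec:GrClc}), the hypothesis $\RE \, \rtc{\red} \, \GrH$ lets us rewrite the expression occurring inside the arc: $\SlS \, \ada \, \arc{\RE}{\Lv} \, \rtc{\red} \, \SlS \, \ada \, \arc{\GrH}{\Lv}$. Now apply the structural rule $(\StrGrRl)$, which gives $\SlS \, \ada \, \arc{\GrH}{\Lv} \, \red \, \Setof{ \SlS \, \ada \, \arc{\SlT}{\Lv} }{ \SlT \in \GrH }$, and then apply $(\StrSlRl)$ to each summand, replacing $\SlS \, \ada \, \arc{\SlT}{\Lv}$ by the glued slice $\ggl{\SlS}{\Lv}{\SlT}$. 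Collecting these, $\SlS \, \ada \, \arc{\GrH}{\Lv} \, \rtc{\red} \, \Setof{ \ggl{\SlS}{\Lv}{\SlT} }{ \SlT \in \GrH } = \ggl{\SlS}{\Lv}{\GrH}$, the last equality being the definition of gluing onto a graph.

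It remains to fold in the hypothesis $\SlS \, \rtc{\red} \, \GrG$, i.e.\ to show $\ggl{\SlS}{\Lv}{\GrH} \, \rtc{\red} \, \ggl{\GrG}{\Lv}{\GrH}$. Here I would again use applicability in any context, but now the rewriting $\SlS \rtc{\red} \GrG$ must be propagated through the gluing construction. Since $\ggl{\SlS}{\Lv}{\GrH} = \bigcup_{\SlT \in \GrH} \ggl{\SlS}{\Lv}{\SlT}$ and each $\ggl{\SlS}{\Lv}{\SlT} \, \eq \, \SlS \, \ada \, \arc{\SlT}{\Lv}$ by Proposition~\ref{Prp:Arcadgl}, the cleanest route is to observe that $\ggl{\SlS}{\Lv}{\SlT}$ literally contains $\SlS$ (its underlying draft is the pushout of $\ud{\SlS}\ada\StC{\Lw}$ with $\ud{\SlT}$), so rewriting that occurrence of $\SlS$ to $\GrG$ turns $\ggl{\SlS}{\Lv}{\SlT}$ into $\ggl{\GrG}{\Lv}{\SlT} = \bigcup_{\SlS' \in \GrG} \ggl{\SlS'}{\Lv}{\SlT}$. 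Running this over all $\SlT \in \GrH$ and taking unions gives $\ggl{\GrG}{\Lv}{\GrH}$, and we chain the two $\rtc{\red}$ derivations by transitivity of the reflexive-transitive closure.

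The main obstacle I anticipate is the bookkeeping around ``conversion in any context'' when the context is a gluing/pushout rather than a syntactic subterm: one must be sure that the rewrite relation $\red$ is genuinely a congruence with respect to the pushout construction on drafts, i.e.\ that replacing a slice by an equivalent graph inside a glued slice yields exactly the glued graph, not merely an equivalent object. This is where one leans on the pushout being defined up to isomorphism (the footnote after Figure~\ref{Fig:PODr}) and on the fact that gluing distributes over the union defining a graph; I would make this precise by inducting on the length of the derivation $\SlS \rtc{\red} \GrG$, reducing to a single rewrite step and checking that a one-step rewrite inside $\SlS$ lifts to a one-step (or empty) rewrite inside $\ggl{\SlS}{\Lv}{\GrH}$ via the same rule applied in the pushout context. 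The remaining steps — the use of $(\StrGrRl)$, $(\StrSlRl)$, and Proposition~\ref{Prp:Arcadgl} — are then routine.
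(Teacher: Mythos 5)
The first half of your argument is sound and coincides with the paper's: rewriting $\RE$ to $\GrH$ inside the arc and then collapsing $\SlS \, \ada \, \arc{\GrH}{\Lv}$ by ($\StrGrRl$) followed by ($\StrSlRl$) is exactly how the derived rule ($\DrStrGGRl$) works. The gap is in your second half, the step $\ggl{\SlS}{\Lv}{\GrH} \, \rtc{\red} \, \ggl{\GrG}{\Lv}{\GrH}$. Once the gluing has been performed, $\SlS$ is no longer a syntactic constituent of $\ggl{\SlS}{\Lv}{\SlT}$ in the sense required by ``application of a rule in any context'': the glued slice is a single slice whose underlying draft is a pushout, with the arcs of $\SlS$ and $\SlT$ merged (and names identified), and the closure conditions that define context application (complementation, rewriting the expression labelling an added arc, adding a slice to a graph, singletons, and rewriting the ambient slice of an arc addition) contain no clause for rewriting a sub-draft of a slice. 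In particular, a step of $\SlS \, \rtc{\red} \, \GrG$ may replace $\SlS$ by a several-slice graph (e.g.\ via ($\lor$) or ($\StrGrRl$)), and there is no rule that performs such a replacement inside a draft. So your ``rewrite the occurrence of $\SlS$ inside the glued slice'' move is not licensed by the calculus; making it precise would require a commutation lemma saying that conversion is a congruence for the gluing/pushout construction, which is at least as strong as the lemma you are proving and is not routine. Appealing to the pushout being defined up to isomorphism does not help, since isomorphism of drafts is not a conversion step.

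The repair is simply to reorder the rewrites, which is what the paper does: from $\SlS \, \ada \, \arc{\RE}{\Lv}$ first pass to $\GrG \, \ada \, \arc{\GrH}{\Lv}$, rewriting $\RE$ to $\GrH$ inside the arc and $\SlS$ to $\GrG$ while the arc is still an explicit added arc (both rewrites are covered by the context clauses); then, by the definition of arc addition to a graph, $\GrG \, \ada \, \arc{\GrH}{\Lv} \, \seq \, \Setof{\SlP \, \ada \, \arc{\GrH}{\Lv}}{\SlP \in \GrG}$, and applying ($\DrStrGGRl$) to each $\SlP$ yields $\Setof{\ggl{\SlP}{\Lv}{\GrH}}{\SlP \in \GrG} \seq \ggl{\GrG}{\Lv}{\GrH}$ by the definition of gluing onto a graph. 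With this ordering every rewrite occurs in a context the rules explicitly allow, and no congruence-with-pushout argument is needed.
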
  
\begin{proof}   By  % NOVO derived  
($\DrStrGGRl$) rule: %. Inded, we have 
$\SlS \, \ada \,  \arc{\RE}{\Lv} \, \rtc{\red} \,  \GrG \, \ada \,  \arc{\GrH}{\Lv} \, \seq \, 
\Setof{ \SlP \, \ada \,  \arc{\GrH}{\Lv} }{\SlP \in \GrG} \, \cnvu{\DrStrGGRl} \, 
\Setof{ \ggl{\SlP}{\Lv}{\GrH} }{\SlP \in \GrG} \, = \, \ggl{\GrG} {\Lv}  {\GrH}$. 
\end{proof}  %  \label{Lem:ModRed} 

	Thus, one can obtain a basic form for $\SlS \, \ada \,  \arc{\RE}{\Lv}$ from basic forms 
$\bsf{\SlS}$ and $\bsf{\RE}$, for $\SlS$ and $\RE$.

\begin{Prop} \label{Prop:Red} 
 Every $\rla$-ary expression $\RE$ can be effectively converted to a basic $\rla$-ary  graph $\bsf{\RE}$.
\end{Prop}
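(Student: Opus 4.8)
The plan is to establish this by structural induction on the $\rla$-ary expression $\RE$, following the recursive definition of expressions given in item~{\sm ($\RE$)} of~\ref{subsec:Sntxsem}. At each step I would exhibit a terminating sequence of conversion-rule applications (in the sense of $\rtc{\red}$), invoking the already-proven soundness of each rule (every rule transforms an expression to an equivalent one) so that equivalence is automatic; the real content is reachability of a \emph{basic} graph and effectiveness of the procedure.

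\textbf{Base cases and easy cases.} First I would handle the atomic building blocks. If $\RE$ is a predicate symbol other than $\eqs$, then by rule ($\StrFlLu$) it converts to a single-arc slice $\BSl \StC{\Lu^\rla} \drs \{ \arc{\RE}{\Lu^\rla} \} \dps \Lu^\rla \ESl$, which is already basic; a one-slice graph is basic and identifiable with its slice. If $\RE \seq \eqs$, apply ($\eqs$) to get the arcless slice $\BSl \{ \vu \} \drs \ES \dps \Lag \vu, \vu \Rag \ESl$, again basic. If $\RE$ is a formula, repeatedly apply the eight formula rules ($\AtRl$, $\bot$, $\neg$, $\land$, $\lor$, $\CndRl$, $\EQRl$, $\UQRl$) working from the outermost connective or quantifier inward; each application strictly decreases the total number of connective/quantifier occurrences inside formula-labelled arcs, so this phase terminates, leaving an expression built only from predicate symbols, $\eqs$, complementation, slices and graphs (with atomic formulas already replaced by single-arc slices). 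The remaining structural and complementation rules then reduce to the cases below. If $\RE$ is itself a slice or a graph, I would use the induction hypothesis on the (strictly smaller) expressions labelling its arcs, converting each such arc-expression to a basic graph, then apply rule ($\StrGrRl$) to distribute over graph-alternatives and rule ($\StrSlRl$) to replace each slice-arc by the corresponding glued slice (Proposition~\ref{Prp:Arcadgl} guarantees this preserves the relation); Lemma~\ref{Lem:ModRed} packages exactly this modular step and lets me assemble the basic form of the whole from basic forms of the parts.

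\textbf{The complementation case.} The subtle case is $\RE \seq \cmpl{\RF}$. By induction, $\RF$ converts to a basic graph $\GrG = \{ \SlT_1, \dots, \SlT_\nex \}$. If $\nex \geq 1$, rule ($\StrFlCUn$) replaces $\cmpl{\GrG}$ by the single slice $\BSl \StC{\Lw} \drs \Setof{ \arc{\cmpl{\SlT_i}}{\Lw}}{1 \le i \le \nex} \dps \Lw \ESl$; each arc-expression $\cmpl{\SlT_i}$ is the complement of a basic slice, hence basic by definition, so the resulting slice is basic. Iterated complements are collapsed first by ($\dc$). The empty-graph subcase $\cmpl{\EG}$ is handled the same way (yielding the arcless $\rla$-ary slice with no arcs — note however that $\bot$ itself already went through rule ($\bot$), and $\cmpl{\bot} \red \cmpl{\EG}$).

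\textbf{Main obstacle.} The chief difficulty is termination: the conversion relation $\red$ is not obviously well-founded, because rule ($\StrSlRl$) replaces an arc by a \emph{glued} slice, which can enlarge the underlying draft, and rule ($\StrFlLu$) even turns an expression into a slice containing that same expression as an arc label. I would address this by fixing a disciplined strategy rather than arbitrary rewriting: define a well-founded measure on expressions — lexicographically, (i)~the number of logical connectives and quantifiers, then (ii)~the nesting depth of slices/graphs/complements, then (iii)~a count of non-basic arcs — and verify that under the chosen order of rule application (formula rules first, then ($\dc$) and ($\eqs$), then complementation rules, then ($\StrGrRl$)/($\StrSlRl$) applied innermost-first, with ($\StrFlLu$) used only to normalise a bare predicate or $\cmpl{\ps}$) every step strictly decreases this measure. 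The modular Lemma~\ref{Lem:ModRed} is what makes the induction go through cleanly, since it lets me treat each arc-label independently and then recombine. Once termination and the case analysis are in place, effectiveness is immediate — each rule is a finitely-described rewrite, gluing is a pushout computable up to a canonical renaming of names, and the whole procedure halts — so $\bsf{\RE}$ is produced algorithmically, completing the proof.
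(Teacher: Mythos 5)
Your proposal is correct and takes essentially the same route as the paper: a structural induction on expressions, converting subexpressions with the formula, equality, complementation and structural rules and recombining them via Lemma~\ref{Lem:ModRed}, with $\cmpl{\RF}$ handled through ($\StrFlCUn$)/($\dc$) so that complemented basic slices appear as basic arc labels. The paper realizes the same case analysis by defining $\bsf{\RE}$ explicitly by mutual recursion (so termination is automatic and your separate well-founded measure is a presentational safeguard rather than a different idea).
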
   
\begin{proof}   By induction on the structure of expressions. %(see Lemma~\ref{Lem:RBE} in the Appendix).  
\end{proof}   %  \label{Prop:Red}  

%  30.Jul  

\begin{Exmpl}  \label{Exmpl:Prf}  
Given the predicate symbols of Example~\ref{Exmpl:mor} (in~\ref{subsec:Cnstr}), consider the formula  $\Fp$: 
\vs{ - \nvs pt} 
\[ \qs(\vv , \vw) \, \land \, \exists \vz  \, [ \ps(\vz) \land  \rs(\vv , \vz) \land 
	 \exists \vx \exists \vy \exists \vy' ( \pss(\vv , \vx) \land  \ts(\vx , \vw) \land \as(\vx , \vy) \land \bs(\vy , \vw) )  ]. \]  
 Consider also the formula   
 $\Ft \deq  \exists^3 \vx_1 \vx_2  \vx_3 \,  \exists \vy' \exists^2 \vy_1 \vy_2 \exists \vz' \, \exists^2  \vz_1, \vz_2 \, \Fx$, 
where   $\Fx$ is as follows:  
\vs{ - \nvs pt} 
\[  \ps(\vz') \land \pss(\vv_2 , \vx_3) \land   \ts(\vx_2 , \vz_1) \land 
	\Lm (  \Ba{c} \qs(\vy_1 , \vz_1) \\ \land \\ \qs(\vy_2 , \vz_2) \Ea \Rm )  \land 
	\Lm (  \Ba{c} \as(\vx_1 , \vv')  \\ \land \\ \as(\vx_2 , \vv') \Ea \Rm ) \land 
	\Lm (  \Ba{c}  \rs(\vv , \vz') \\ \land \\ \rs(\vy_1 , \vz') \\ \land \\ \rs(\vy_2 , \vz') \Ea \Rm ) \land 
	 \Lm (  \Ba{c}  \bs(\vy', \vw) \\ \land \\  \bs(\vy', \vz_1) \\ \land \\  \bs(\vy', \vz_2)   \Ea \Rm ).
\] 
Now, form the difference slice 
$\NDS{\{ \FA{\Fp} \} }{\FA{\Ft}} \seq  
\BSl \{ \vv, \vw \}  \drs \{ \arc{\Fp}{ \vv \, \vw },  \arc{\cmpl{\Ft}}{ \vv \, \vw} \dps \NL \} \ESl$. 
Expressions  $\Fp$ and $\Ft$ can be  respectively converted  to the $2$-ary slices  
$\SlS \seq \BSl \snt{D}' \dps \Lag  \vv , \vw \Rag  \ESl$ and   
$\SlT \seq \BSl \snt{D}'' \dps \Lag  \vv , \vw \Rag  \ESl$, 
where $\snt{D}'$ and $\snt{D}''$ are the drafts of  Example~\ref{Exmpl:mor}.  
Thus, we have  
$\NDS{\{ \FA{\Fp} \} }{\FA{\Ft}}   \, \rtc{\red} \,  
\BSl \{ \vv, \vw \}  \drs \{ \arc{\SlS}{ \vv \, \vw },  \arc{\cmpl{\SlT}}{ \vv \, \vw} \dps \NL \} \ESl$. 
Now, we can see that  
$\BSl \{ \vv, \vw \}  \drs \{ \arc{\SlS}{ \vv \, \vw},  \arc{\cmpl{\SlT}}{ \vv \, \vw} \dps \NL \} \ESl 
 \, \cnvu{\StrSlRl} \,   
\BSl \ud{\SlS} \ada  \arc{\cmpl{\SlT}}{ \vv \, \vw} \dps \NL \ESl$.\fn{Indeed: 
$\BSl \{ \vv, \vw \}  \drs \{ \arc{\SlS}{ \vv \, \vw},  \arc{\cmpl{\SlT}}{ \vv \, \vw} \dps \NL \} \ESl \seq  
\BSl \{ \vv, \vw \}  \drs \{ \arc{\cmpl{\SlT}}{ \vv \, \vw} \dps \NL \} \ESl \ada 
\arc{\SlS}{ \vv \, \vw} \, \cnvu{\StrSlRl} \, 
\ggl{ \BSl \{ \vv, \vw \}  \drs \{ \arc{\cmpl{\SlT}}{ \vv \, \vw} \dps \NL \} \ESl } {\Lag \vv , \vw \Rag} {\SlS} \seq  
\BSl \ud{\SlS} \ada  \arc{\cmpl{\SlT}}{ \vv \, \vw} \dps \NL \ESl$.} 
Hence %Thus we have   
$\NDS{\{ \FA{\Fp} \} }{\FA{\Ft}}   \, \rtc{\red} \,   
\BSl \ud{\SlS} \ada  \arc{\cmpl{\SlT}}{ \vv \, \vw} \dps \NL \ESl$. 
\end{Exmpl} % \label{Exmpl:Prf} 

%  30.Jul  

%\subsection{Reduction} \label{subsec:Red}  
%---------------------------------------------------------------------
\subsection{Derivations} \label{subsec:Drv} 
We now introduce the remaining rule and finish the presentation of our   calculus. 

	First, let us review  Examples~\ref{Exmpl:Unstsf} and~\ref{Exmpl:Prf} (in~\ref{subsec:Red}).  
Formula $\RE$  of  Example~\ref{Exmpl:Unstsf} converts  to the $2$-ary slice 
$\BSl \snt{D} \dps \vv' \, \vw' \ESl$, which was  %where  $\snt{D}$ is the draft 
seen to be zero in Example~\ref{Exmpl:ZrDrft} (in~\ref{subsec:Cnstr}). 
Thus, formula $\RE$ is unsatisfiable. 
Now, consider formulas $\Fp$ and $\Ft$ of  Example~\ref{Exmpl:Prf}, 
where we have seen that 
$\NDS{\{ \FA{\Fp} \} }{\FA{\Ft}}   \, \rtc{\red} \,   
\BSl \ud{\SlS} \ada  \arc{\cmpl{\SlT}}{ \vv \, \vw} \dps \NL \ESl$. 
Now, Example~\ref{Exmpl:mor} (in~\ref{subsec:Cnstr}) shows a morphism 
$\he: \ud{\SlT} \mor \ud{\SlS}$, with 
$\DLST^\he \seq \Lag \vv^\he , \vw^\he \Rag \seq \Lag \vv , \vw \Rag \seq \DLSS$. 
Thus, draft  
$ \ud{\SlS} \ada  \arc{\cmpl{\SlT}}{ \vv \, \vw}$ is zero,  
whence,  slice 
$\NDS{\{ \FA{\Fp} \} }{\FA{\Ft}}$ is null.  
Therefore, we can conclude that $\Fp \cnq \Ft$. 

% NOVA NOT 30.Jul.

\begin{Exmpl}  \label{Exmpl:Exprl}	
To introduce expansion and its usefulness, 
consider the  $3$-ary  slice $\SlS$: 
%Novo desenho 2
\[  \xymatrix@R30pt@C20pt{
%\ar[d]_{{\Ds \cmpl{\fbox{
%\SlT_1
%}}
%}}
\ar[d]_{{\Ds  \cmpl{\fbox{$
\SlT_1$}
}}}\vu^1\ar[r]^{{\Ds \, \rs}} &\vw^3\ar[d]^{{\Ds  \cmpl{\fbox{$
\SlT_2$}
}}}
\\
\vw'\ar[r]^{{\Ds \, \ts}}&\vu^2
}
\]
where  
$\Ba{ccc} \SlT_1 & := &   \vu^1\, \strl{{\Ds \rs}}{\rar} \,  \vw \, \strl{{\Ds \pss}}{\rar} \, \vv^2 \Ea$, % and   
$\Ba{ccc} \SlT_2 & := &  \vu^1 \, \strld{\cmpl{\pss}}{\rar} \,  \vw\,  \strld{\ts}{\rar} \,\vv^2  \Ea$.

\noindent 
Slice $\SlS$ is not zero; but in any model $\gM$, the pair $\Bop \g(\vw) , \g(\vw') \Eop$   
is  either in  $\rel{\pss}_{\gM}$ or in  $\rel{\cmpl{\pss}}_{\gM}$. 
So,  $\SlS$ is equivalent to   the $2$-ary graph $\GrG \seq \{ \SlS_+ , \SlS_- \}$, 
with slices $\SlS_+$ and  $\SlS_-$, respectively  as follows: 
\[ \Ba{ccc} 
 \xymatrix@R32pt@C32pt{
\ar[d]_{{\Ds \cmpl{\fbox{$
\vu^1\, \strl{{\Ds \rs}}{\rar}  \, \vw\strl{{\Ds \pss}}{\rar}   \vv^2  \,  
$}
}
}} \vu^1 \ar[r]^{{\Ds \, \rs}} &\, \vw^3\ar[d]^{{ \, \cmpl{\fbox{$
\SlT_2
$}}
}}\ar[dl]|{{
\Ds \pss
}}
\\
\vw'\ar[r]_{{\Ds \,  \ts }} &\vv^2 
} & \hs{\LVS pt}   \hs{\LVS pt}   & 
\xymatrix@R32pt@C30pt{ 
\ar[d]_{{ \Ds\cmpl{\fbox{$\SlT_1
$}
}
}}\vu^1 \ar[r]^{{\Ds \rs
}}& \, \vw^3\ar[d]^{{\cmpl{\fbox{$\vu^1\strld{\cmpl{\pss}}{\rar} \vw \,\strld{\ts}{\rar}\,\vv^2
$}
}
}}\ar[dl]|{{\Ds \cmpl{\pss}
}}
\\
\vw'\ar[r]_{{\Ds \ts}}&\vv^2
}
\Ea \] 
Slices $\SlS_+$ and  $\SlS_-$ are both zero, so graph $\GrG$ is zero. 
Thus, $\SlS$ is a null slice.  
\end{Exmpl}  % \label{Exmpl:Exprl} 

% NOVA NOT 30.Jul.

	The expansion rule will %allow replacing 
replace a slice by a graph with $2$ alternative slices. 
\Bd 
\item[{\sm ($\ex$)}]  For an $\ara$-ary  slice $\SlT$ and $\Lv \in {\NS_{\SlS}}^\ara$: 
$\SlS \, \ex \, \{  \ggl{\SlS} {\Lv}  {\SlT} , \SlS \, \ada \,  \arc{\cmpl{\SlT}}{\Lv} \}$.
\Ed 
Note that both $\ggl{\SlS} {\Lv} {\SlT}$ and $\SlS \, \ada \,  \arc{\cmpl{\SlT}}{\Lv}$ are basic whenever 
$\SlS$ and $\SlT$ are basic. 

\begin{Lem} \label{Lem:SndExp} 
For a   slice $\SlS$, an $\ara$-ary slice $\SlT$ and $\Lv \in \IN^\ara$: 
$\SlS \, \eq \, \{  \ggl{\SlS} {\Lv}  {\SlT} , \SlS \, \ada \,  \arc{\cmpl{\SlT}}{\Lv} \}$.
\end{Lem}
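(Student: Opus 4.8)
The plan is to reduce the statement to a purely semantic verification in an arbitrary model $\gM$, using Proposition~\ref{Prp:Arcadgl} to dispose of the pushout. First I would rewrite the right-hand side: by Proposition~\ref{Prp:Arcadgl} we have $\ggl{\SlS}{\Lv}{\SlT} \eq \SlS \ada \arc{\SlT}{\Lv}$, and since the extension of a graph is the union of the extensions of its slices, it suffices to prove $\SlS \eq \{ \SlS \ada \arc{\SlT}{\Lv} , \SlS \ada \arc{\cmpl{\SlT}}{\Lv} \}$, i.e.\ that $\betM{\SlS} = \betM{\SlS \ada \arc{\SlT}{\Lv}} \cup \betM{\SlS \ada \arc{\cmpl{\SlT}}{\Lv}}$ holds in every model $\gM$. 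Write $\ud{\SlS} = \BDr \NS \drs \AS \EDr$; then $\SlS \ada \arc{\SlT}{\Lv}$ and $\SlS \ada \arc{\cmpl{\SlT}}{\Lv}$ both carry the distinguished list $\DLSS$ (still in $\NS^\rla$) and their underlying drafts have name set $\NS \cup \StC{\Lv}$.

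For the inclusion $\supseteq$, I would use that adding an arc can only delete satisfying assignments: if an assignment $\g$ with domain $\NS \cup \StC{\Lv}$ satisfies $\ud{\SlS}\ada\arc{\SlT}{\Lv}$ or $\ud{\SlS}\ada\arc{\cmpl{\SlT}}{\Lv}$, then it satisfies every arc of $\ud{\SlS}$, all of which are over $\NS$, so its restriction to $\NS$ satisfies $\ud{\SlS}$; since $\DLSS \in \NS^\rla$, the value $\DLSS^\g$ is unchanged by restriction, hence $\DLSS^\g \in \betM{\SlS}$. Thus both terms on the right are contained in $\betM{\SlS}$.

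For the inclusion $\subseteq$, take $r \in \betM{\SlS}$, say $r = \DLSS^\g$ for some assignment $\g : \NS \rar M$ satisfying $\ud{\SlS}$. Since $M \neq \ES$, extend $\g$ to an assignment $\g'$ with domain $\NS \cup \StC{\Lv}$; it still satisfies $\ud{\SlS}$ and $\DLSS^{\g'} = r$. Now split on whether $\Lv^{\g'} \in \betM{\SlT} = \rel{\SlT}_\gM$: if so, $\g' \arcsat{\gM} \arc{\SlT}{\Lv}$, whence $r \in \betM{\SlS \ada \arc{\SlT}{\Lv}}$; if not, $\Lv^{\g'} \in M^\ara \setminus \rel{\SlT}_\gM = \rel{\cmpl{\SlT}}_\gM$, so $\g' \arcsat{\gM} \arc{\cmpl{\SlT}}{\Lv}$ (recall $\g' \arcsat{\gM}\arc{\cmpl{\SlT}}{\Lv}$ iff $\g' \not\arcsat{\gM} \arc{\SlT}{\Lv}$), whence $r \in \betM{\SlS \ada \arc{\cmpl{\SlT}}{\Lv}}$. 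In either case $r$ lies in the union, which closes the argument.

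There is no real obstacle here: the proof is essentially an application of the law of excluded middle to the membership $\Lv^{\g'} \in \betM{\SlT}$, with Proposition~\ref{Prp:Arcadgl} absorbing all the work about gluing and pushouts. The only points needing a little care are bookkeeping the domains of the assignments (the list $\Lv$ may mention names not already in $\NS$, so one must extend $\g$ before adding the arc) and invoking the elementary fact, recorded just after the semantic definitions, that $\g \arcsat{\gM} \arc{\cmpl{\RE}}{\Lv}$ iff $\g \not\arcsat{\gM} \arc{\RE}{\Lv}$.
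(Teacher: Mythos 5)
Your proposal is correct and follows essentially the same route as the paper: invoke Proposition~\ref{Prp:Arcadgl} to replace the glued slice by $\SlS \ada \arc{\SlT}{\Lv}$, and then observe that $\SlS \eq \{ \SlS \ada \arc{\SlT}{\Lv} , \SlS \ada \arc{\cmpl{\SlT}}{\Lv} \}$. The paper dismisses this last equivalence as ``clearly''; you simply spell it out (including the harmless extension of the assignment to the names of $\Lv$ and the excluded-middle split on $\Lv^{\g'} \in \rel{\SlT}_{\gM}$), which is a faithful elaboration rather than a different argument.
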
  
\begin{proof} 
 By %Lemma~\ref{Lem:Hmtrnsfr} and 
 Proposition~\ref{Prp:Arcadgl} (in~\ref{subsec:Cnstr}), 
 $\SlS \ada \arc{\SlT}{\Lv} \, \eq \, \ggl{\SlS} {\Lv}  {\SlT}$,  and clearly      
 $\SlS \, \eq \, \{ \SlS \ada \arc{\SlT}{\Lv} , \SlS \, \ada \,  \arc{\cmpl{\SlT}}{\Lv} \}$.
 \end{proof}   % \label{Lem:SndExp} 

	A \emph{derivation} consists of applications of   the conversion rules and the expansion rule: %i.~e.  
$\Der \, \deq \,  \rtc{( \red \cup \ex)}$. 
A derivation is \emph{normal} iff applications of conversion rules precede applications of the expansion rule: $\RE \,  \rtc{\red} \, \GrG  \,  \rtc{\ex} \, \GrH$.  
In practice, as we wish to derive a zero graph, we may erase slices already found to be zero. 

% NOVA NOT 30.Jul.	
Let $\Ff$ be the formula 
$\rs(\vu, \vw) \land \ts(\vw', \vv) \land \neg \exists \vx [ \rs(\vu, \vx) \land \pss(\vx, \vv) ] \land 
	\neg \exists \vy [ \neg \pss(\vu, \vy) \land \ts(\vy, \vv) ]$. 
Expression  $\RE \deq \exists \vx \Ff \Brn \vw' \rpm \vx \Ern$ %$\RE \deq \exists \vv \Ff$ 
converts to the slice $\SlS$ of Example~\ref{Exmpl:Exprl}, where it expands to the  graph $\GrG$.  
% NOVA NOT 30.Jul.
We thus have the normal derivation 
$\RE \,  \rtc{\red} \, \SlS  \,  \ex \, \GrG$, with $\GrG$ a zero graph. 
Hence, formula $\Ff$ is unsatisfiable.

%---------------------------------------------------------------------
\subsection{Soundness and completeness} \label{subsec:SndCmplt}  
We now examine soundness and completeness of our calculus. 

	Soundness is clear (as $\RE \, \eq \, \RF$, whenever $\RE \, \Der \, \RF$): 
if  $\RE \, \Der \, \GrH$ and $\GrH$ is zero, then $\RE$ is null. 
We will show that a converse holds for basic graphs  
(if  $\bsf{\RE}$ is null then $\bsf{\RE}$ expands to a zero graph), and we will have  
completeness of normal derivations: 
if $\RE$ is null, then $\RE \,  \rtc{\red} \, \bsf{\RE}  \,  \rtc{\ex} \, \GrH$, for some zero graph $\GrH$. 

	Henceforth, all sketches, drafts, slices and graphs will be basic. 
We define the following families of slices:  
the family $\ZS$ of  \emph{zero slices} (cf.~\ref{subsec:Cnstr}); 
the family $\EZS$ of \emph{expansivley zero slices}: 
the slices $\SlS$ such that, for some graph $\GrG \incl \ZS$, 
$\SlS \, \rtc{\ex} \, \GrG$;   
the family $\NZS$ of \emph{not expansively zero slices}:   the slices 
outside $\EZS$.

%The family $\ZS$ of  \emph{zero slices} (cf.~\ref{subsec:Cnstr}). 
%The family $\EZS$ of \emph{expansivley zero slices}: 
%the slices $\SlS$ such that, for some graph $\GrG \incl \ZS$, 
%$\SlS \, \rtc{\ex} \, \GrG$.  
%The family $\NZS$ of \emph{not expansively zero slices}:   the slices outside $\EZS$.  

	The following simple properties of these families will be useful.

\begin{Lem} \label{Lem:EZZS} 
For every graph $\GrG$: 
$\GrG \incl \EZS$ iff, for some graph  $\GrH \incl \ZS$, $\GrG \, \rtc{\ex} \, \GrH$.  
\end{Lem}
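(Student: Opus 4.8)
The plan is to prove the two implications separately, with the forward direction being the one that needs an argument and the backward direction being essentially immediate. For the backward direction, if $\GrG \, \rtc{\ex} \, \GrH$ for some $\GrH \incl \ZS$, then I must show each slice $\SlS \in \GrG$ is expansively zero. The point is that the expansion relation $\ex$ acts slicewise: an application of rule~($\ex$) replaces one slice $\SlS'$ of the current graph by the two slices $\ggl{\SlS'}{\Lv}{\SlT}$ and $\SlS' \ada \arc{\cmpl{\SlT}}{\Lv}$, leaving the other slices untouched. Hence a derivation $\GrG \, \rtc{\ex} \, \GrH$ decomposes: the slices of $\GrH$ are partitioned into groups, one group $\GrH_{\SlS}$ for each $\SlS \in \GrG$, with $\SlS \, \rtc{\ex} \, \GrH_{\SlS}$. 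Since $\GrH_{\SlS} \incl \GrH \incl \ZS$, each $\SlS$ is expansively zero by definition of $\EZS$, so $\GrG \incl \EZS$.

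For the forward direction, suppose $\GrG \incl \EZS$, say $\GrG \seq \{ \SlS_1, \dots, \SlS_\art \}$. By definition of $\EZS$, for each $i$ there is a graph $\GrG_i \incl \ZS$ with $\SlS_i \, \rtc{\ex} \, \GrG_i$. The idea is to assemble these into a single derivation on $\GrG$. Because $\ex$ acts slicewise (and because, per the remark preceding~\ref{subsec:Red}, a rule may be applied in any context, so in particular $\SlS_i \, \rtc{\ex} \, \GrG_i$ lifts to $\GrG' \cup \{ \SlS_i \} \, \rtc{\ex} \, \GrG' \cup \GrG_i$ for any graph $\GrG'$), I can process the slices of $\GrG$ one at a time: first expand $\SlS_1$ inside $\GrG$ to reach $\GrG_1 \cup \{ \SlS_2, \dots, \SlS_\art \}$, then expand $\SlS_2$ to reach $\GrG_1 \cup \GrG_2 \cup \{ \SlS_3, \dots, \SlS_\art \}$, and so on. Concatenating these $\art$ sub-derivations gives $\GrG \, \rtc{\ex} \, \GrG_1 \cup \cdots \cup \GrG_\art$, and the target graph $\GrH \deq \bigcup_{i} \GrG_i$ is a subset of $\ZS$ since each $\GrG_i$ is.

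I expect the only real subtlety — the ``hard part'' — to be stating precisely the slicewise-compositionality of $\ex$ that both directions rely on: namely that a derivation out of a graph splits along its slices, and conversely that per-slice derivations recombine. This is intuitively obvious from the shape of rule~($\ex$) and the ``any context'' convention, but it deserves an explicit formulation (perhaps by induction on the length of the derivation, observing that each single $\ex$-step affects exactly one slice). Once that is in hand, both implications are bookkeeping. I would therefore begin by recording this compositionality observation, then dispatch the backward direction, then the forward direction as above.
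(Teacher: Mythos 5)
Your proposal is correct and follows essentially the same route as the paper: the backward direction splits a derivation $\GrG \, \rtc{\ex} \, \GrH$ into per-slice derivations $\SlS \, \rtc{\ex} \, \GrH_{\SlS}$ with $\GrH_{\SlS} \incl \GrH \incl \ZS$, and the forward direction recombines the per-slice derivations into one with target $\bigcup_{\SlS \in \GrG} \GrH_{\SlS} \incl \ZS$. The only difference is that you make explicit the slicewise-compositionality of $\ex$, which the paper leaves implicit in its ``apply a rule in any context'' convention.
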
  
\begin{proof} 
 %By linearity of derivations.
 ($\Rar$)  If, for each $\SlS \in \GrG$, $\SlS \, \rtc{\ex} \, \GrH_{\SlS}$ % with 
 and $\GrH_{\SlS} \incl \ZS$, then, with %setting 
 $\GrH \deq  \bigcup_{\SlS \in \GrG} \, \GrH_{\SlS}$, %we have  
$\GrG \, \rtc{\ex} \, \GrH$ % with 
and $\GrH  \incl \ZS$.  \\
 ($\Lar$)  if $\GrG \, \rtc{\ex} \, \GrH$, with $\GrH \incl \ZS$, then 
 %Assume  $\GrG \, \rtc{\ex} \, \GrH$, with $\GrH \incl \ZS$. Then, 
 for each $\SlS \in \GrG$, $\SlS \, \rtc{\ex} \, \GrH_{\SlS}$, with $\GrH_{\SlS} \incl \GrH  \incl \ZS$, 
 whence $\SlS \in \EZS$.  % (see the Appendix).   
 %AQUI
 \end{proof}   % \label{Lem:EZZS} 
 
\begin{Lem} \label{Lem:EZEZS} 
For every graph $\GrG$:  
$\GrG \incl \EZS$ iff, for some graph  $\GrH \incl \EZS$, $\GrG \, \rtc{\ex} \, \GrH$.  
\end{Lem}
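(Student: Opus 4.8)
The plan is to reduce this statement to Lemma~\ref{Lem:EZZS} by exploiting the fact that $\rtc{\ex}$ is, by definition, a reflexive-transitive closure; beyond what that earlier lemma provides, essentially no new work is needed. Informally, the content here is just that $\EZS$ is ``saturated'' under further expansion, which will be convenient in the completeness argument.

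For the forward implication ($\Rar$), I would take $\GrH \deq \GrG$: since $\rtc{\ex}$ is reflexive we have $\GrG \, \rtc{\ex} \, \GrG$, and by hypothesis $\GrG \incl \EZS$, so this $\GrH$ witnesses the right-hand side. Nothing more is required for this direction.

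For the converse ($\Lar$), suppose $\GrG \, \rtc{\ex} \, \GrH$ with $\GrH \incl \EZS$. First I would apply Lemma~\ref{Lem:EZZS} to $\GrH$ to obtain a graph $\snt{K} \incl \ZS$ with $\GrH \, \rtc{\ex} \, \snt{K}$. Then, concatenating the two derivations $\GrG \, \rtc{\ex} \, \GrH$ and $\GrH \, \rtc{\ex} \, \snt{K}$ and using transitivity of $\rtc{\ex}$, I obtain $\GrG \, \rtc{\ex} \, \snt{K}$ with $\snt{K} \incl \ZS$. A second appeal to Lemma~\ref{Lem:EZZS}, now in the direction ``existence of such a $\snt{K}$ implies $\GrG \incl \EZS$'', finishes the argument.

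The only point meriting a moment's care is that the expansion relation $\ex$ on graphs is set up so that $\rtc{\ex}$ is genuinely a reflexive-transitive closure, i.~e. that the concatenation of two expansion derivations is again an expansion derivation. This is immediate from the convention, fixed at the start of Section~\ref{sec:GrClc}, that $\rtc{R}$ denotes the reflexive-transitive closure of $R$, so I do not anticipate any real obstacle; all the substantive work is already packaged in Lemma~\ref{Lem:EZZS}, and this lemma is a routine corollary.
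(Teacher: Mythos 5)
Your proof is correct and takes essentially the same route as the paper: the substantive converse direction is identical (expand $\GrH$ to a zero graph via Lemma~\ref{Lem:EZZS}, concatenate with $\GrG \, \rtc{\ex} \, \GrH$ by transitivity, and apply Lemma~\ref{Lem:EZZS} again in the other direction). The only cosmetic difference is in the trivial forward direction, where you take $\GrH \deq \GrG$ and use reflexivity of $\rtc{\ex}$, whereas the paper instead uses the zero graph supplied by Lemma~\ref{Lem:EZZS} together with the inclusion $\ZS \incl \EZS$; either is fine.
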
  
\begin{proof} 
 By Lemma~\ref{Lem:EZZS},   since $\ZS \incl \EZS$. %\fn{ 
 ($\Rar$) If $\GrG \incl \EZS$, then $\GrG \, \rtc{\ex} \, \GrH$, with $\GrH \incl  \ZS \incl \EZS$. 
  ($\Lar$) If  $\GrG \, \rtc{\ex} \, \GrH$, with $\GrH \incl \EZS$, then  $\GrH \, \rtc{\ex} \, \GrH'$, 
  with  $\GrH' \incl \ZS$, whence $\GrG \, \rtc{\ex} \, \GrH'$, with $\GrH' \incl \EZS$. %} 
 \end{proof}   % \label{Lem:EZEZS} 
 
 \begin{Cor} \label{Lem:NEZ} 
For %a slice 
$\SlS \in  \NZS$,    $\ara$-slice $\SlT$ and $\Lv \in {\NS_{\SlS}}^\ara$:  
one of $\ggl{\SlS} {\Lv}  {\SlT}$ and $\SlS \, \ada \,  \arc{\cmpl{\SlT}}{\Lv}$ is not expansively zero. 
\end{Cor}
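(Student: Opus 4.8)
The statement is essentially an immediate consequence of the expansion rule $(\ex)$ together with Lemma~\ref{Lem:EZEZS}. The plan is to argue by contradiction: assume that \emph{both} $\ggl{\SlS}{\Lv}{\SlT}$ and $\SlS \ada \arc{\cmpl{\SlT}}{\Lv}$ are expansively zero, and derive that $\SlS$ itself is expansively zero, contradicting $\SlS \in \NZS$.

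\textbf{Key steps.} First I would form the two-slice graph $\GrG \deq \{ \ggl{\SlS}{\Lv}{\SlT} , \SlS \ada \arc{\cmpl{\SlT}}{\Lv} \}$. By the very form of the expansion rule $(\ex)$ (applicable since $\SlT$ is $\ara$-ary and $\Lv \in {\NS_{\SlS}}^\ara$), we have the single-step expansion $\SlS \, \ex \, \GrG$, and hence, identifying $\SlS$ with the singleton graph $\{ \SlS \}$ (cf.~\ref{subsec:Sntxsem},~\ref{sec:GrClc}), $\{ \SlS \} \, \rtc{\ex} \, \GrG$. Second, the contradiction hypothesis says precisely that $\ggl{\SlS}{\Lv}{\SlT} \in \EZS$ and $\SlS \ada \arc{\cmpl{\SlT}}{\Lv} \in \EZS$, i.e.\ $\GrG \incl \EZS$. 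Third, applying Lemma~\ref{Lem:EZEZS} to the graph $\{ \SlS \}$ (in the direction $(\Lar)$, with the graph $\GrH$ of that lemma instantiated to $\GrG$), from $\{ \SlS \} \, \rtc{\ex} \, \GrG$ and $\GrG \incl \EZS$ we conclude $\{ \SlS \} \incl \EZS$, that is, $\SlS \in \EZS$. This contradicts the assumption $\SlS \in \NZS$, since $\NZS$ consists exactly of the slices outside $\EZS$. Therefore at least one of $\ggl{\SlS}{\Lv}{\SlT}$ and $\SlS \ada \arc{\cmpl{\SlT}}{\Lv}$ is not expansively zero.

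\textbf{Main obstacle.} There is no real difficulty here; the only points requiring a little care are (i)~checking that the pair $\{ \ggl{\SlS}{\Lv}{\SlT} , \SlS \ada \arc{\cmpl{\SlT}}{\Lv} \}$ is literally the right-hand side of an instance of rule $(\ex)$, so that $\{ \SlS \} \, \rtc{\ex} \, \GrG$ genuinely holds in one step, and (ii)~matching the quantifier pattern of Lemma~\ref{Lem:EZEZS} by taking its graph $\GrG$ to be the singleton $\{ \SlS \}$. Both are routine, so the corollary follows in a few lines.
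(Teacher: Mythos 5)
Your argument is exactly the paper's proof of this corollary, merely spelled out in more detail: the paper also invokes Lemma~\ref{Lem:EZEZS} to conclude that if $\{\ggl{\SlS}{\Lv}{\SlT}, \SlS \ada \arc{\cmpl{\SlT}}{\Lv}\} \incl \EZS$ then $\SlS \in \EZS$, contradicting $\SlS \in \NZS$. Your explicit checks that rule $(\ex)$ applies and that the singleton $\{\SlS\}$ instantiates the lemma are exactly the steps the paper leaves implicit, so the proposal is correct and takes the same route.
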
  
\begin{proof} 
 By Lemma~\ref{Lem:EZEZS}: if    
 $\{  \ggl{\SlS} {\Lv}  {\SlT} , \SlS \, \ada \,  \arc{\cmpl{\SlT}}{\Lv} \} \incl \EZS$,  then
$\SlS \in  \EZS$.  
 \end{proof}   % \label{Lem:NEZ} 

	We will show that a slice $\SlS \in  \NZS$ has a model $\gM$ with $\betM{\SlS} \neq \ES$ 
%We proceed as follows 
%(see the Appendix for more details). 

	Given  a slice $\SlS \in  \NZS$, we can obtain a set of slices 
$\SlS_n \seq \BSl \NS_n \drs \AS_n \dps \DLSS_n \ESl$ with $\SlS_n \in \NZS$, for $n \in \Nat$, 
 whose underlying drafts are connected by morphisms 
$\Fm{n}$ from $\ud{\SlS}_n$ to $\ud{\SlS}_{n+1}$, which we extend naturally to morphisms 
$\FM{i} {j}: \ud{\SlS}_i \mor \ud{\SlS}_j$, for $i \le j$. 
Consider  the co-limit of this draft diagram: sketch  $\dg \seq \BDr \NS \drs  \AS \EDr$ 
 with morphisms $\cf{n} : \ud{\SlS}_n \mor \dg$ (cf.~\ref{subsec:Cnstr}). 
 We use this co-limit sketch $\dg$ to define  a \emph{natural model} $\gM$ with   
 $M \deq \NS$, 
 and $\ps^\gM \deq \Setof{ \Lv \in M^\rla}{ \arc{\ps}{\Lv} \in \AS}$,  for $\ps \in \PS_\rla$. 
 
	By construction, the co-limit sketch $\dg$ is saturated  in the following sense: 
given any  $\ara$-ary slice $\SlT \seq \BSl  \ud{\SlT} \dps \DLST \ESl$ and $\Lw \in \NS^\ara$, 
we have arc $\arc{\cmpl{\SlT}}{\Lw} \in \AS$ or 
there is  a morphism $\he : \ud{\SlT} \mor \dg$ with $\DLST^\he = \Lw$.

 	We can establish that satisfying assignments are morphisms.

\begin{Lem} \label{Lem:SatMor} 
Given a %basic 
draft  $\snt{D}$ and $\g : \NS_{\snt{D}} \rar M$, 
 $\g : \snt{D} \rar \gM$ iff  $\g : \snt{D} \mor \dg$. 
\end{Lem}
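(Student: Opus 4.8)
The plan is to prove the biconditional directly from the definitions, unwinding what it means for $\g$ to satisfy a draft and what it means for $\g$ to be a morphism from $\snt{D}$ to the co-limit sketch $\dg$. Recall that $\dg \seq \BDr \NS \drs \AS \EDr$ with $M \deq \NS$, and that $\ps^\gM \deq \Setof{ \Lv \in M^\rla}{ \arc{\ps}{\Lv} \in \AS}$ for $\ps \in \PS_\rla$. Since we are working entirely with basic objects (as stated just before the lemma), every arc of $\snt{D}$ has the form $\arc{\ps}{\Lw}$ for a predicate symbol $\ps \in \PS_\rla$ other than $\eqs$, or $\arc{\eqs}{\Lw}$; I expect the equality case to require a separate small remark, since $\eqs^\gM$ is fixed to be the diagonal rather than read off from $\AS$. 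Actually, since $\dg$ is itself basic, arcs $\arc{\eqs}{\cdots}$ do not occur in $\AS$; so I would note at the outset that the only arcs to handle are $\arc{\ps}{\Lw}$ with $\ps$ a genuine predicate symbol.

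First I would fix a basic draft $\snt{D} \seq \BDr \NS_{\snt{D}} \drs \AS_{\snt{D}} \EDr$ and an arbitrary function $\g : \NS_{\snt{D}} \rar M$, and recall that, by definition, $\g : \snt{D} \rar \gM$ means $\g$ satisfies every arc $\snt{a} \in \AS_{\snt{D}}$ in $\gM$, i.e.\ $\g \arcsat{\gM} \snt{a}$ for all $\snt{a} \in \AS_{\snt{D}}$, whereas $\g : \snt{D} \mor \dg$ means that for every arc $\arc{\ps}{\Lw} \in \AS_{\snt{D}}$ we have $\arc{\ps}{\Lw^\g} \in \AS$. So it suffices to show, for a single basic arc $\arc{\ps}{\Lw} \in \AS_{\snt{D}}$, the equivalence: $\g \arcsat{\gM} \arc{\ps}{\Lw}$ iff $\arc{\ps}{\Lw^\g} \in \AS$.

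The core computation is then a two-line chain of equivalences using only the definitions of arc-satisfaction and of the relation of a predicate symbol: $\g \arcsat{\gM} \arc{\ps}{\Lw}$ iff $\Lw^\g \in \rel{\ps}_{\gM}$ iff $\Lw^\g \in \ps^\gM$ iff (by the defining clause for $\ps^\gM$ in the natural model) $\arc{\ps}{\Lw^\g} \in \AS$. Quantifying over all arcs of $\snt{D}$ then yields both directions of the lemma simultaneously. I would close by remarking that no property of the co-limit structure beyond the explicit description of $M$ and $\ps^\gM$ is used here; the saturation property of $\dg$ plays no role in \emph{this} lemma (it will be invoked in a later step, when one must show $\betM{\SlS} \neq \ES$).

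The only genuine obstacle I anticipate is bookkeeping around the equality predicate: one must be sure that $\eqs$-arcs really are absent from basic drafts, so that the uniform clause for $\ps^\gM$ covers every arc that can occur. Since ``basic'' was defined so that basic expressions are predicate symbols \emph{other than} $\eqs$ (together with complemented basic slices, which also do not appear as arc-expressions in a basic draft after the structural rules have been applied), this is immediate; but I would state it explicitly rather than leave it implicit. Everything else is a routine unfolding of definitions.
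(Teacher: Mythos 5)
There is a genuine gap, and it lies in your opening reduction: you assert that, because everything is basic, every arc of $\snt{D}$ has a predicate symbol as its expression, and that complemented basic slices ``do not appear as arc-expressions in a basic draft''. That is false, and it is precisely the point of the whole construction that they do. By the paper's definition, the basic expressions are the predicate symbols other than $\eqs$ \emph{and} the expressions $\cmpl{\SlT}$ for $\SlT$ a basic slice; so a basic draft may (and in all interesting cases does) contain arcs of the form $\arc{\cmpl{\SlT}}{\Lv}$ --- e.g.\ the draft $\snt{D}$ of Example~\ref{Exmpl:ZrDrft}, which the paper explicitly lists as basic, contains the arc $\arc{\cmpl{\SlT}}{\vu' \, \vw'}$. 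For such an arc your two-line chain of equivalences breaks down: the natural model $\gM$ only specifies $\ps^\gM$ for predicate symbols, so satisfaction of $\arc{\cmpl{\SlT}}{\Lv}$, which means $\Lv^\g \notin \betM{\SlT}$, is \emph{not} read off from membership of $\arc{\cmpl{\SlT}}{\Lv^\g}$ in $\AS$ by definition; equating the two is exactly what the lemma has to prove. Your argument establishes only the base case of the paper's induction.

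The paper proves the lemma by induction on the number (or rank) of complemented slice arcs in $\snt{D}$, and both extra ingredients you dismiss are needed in the inductive step for an arc $\arc{\cmpl{\SlT}}{\Lv}$. For one direction: if $\arc{\cmpl{\SlT}}{\Lv^\g} \notin \AS$, then saturation of the co-limit sketch $\dg$ yields a morphism $\he : \ud{\SlT} \mor \dg$ with image of the distinguished list equal to $\Lv^\g$; by the inductive hypothesis this morphism is a satisfying assignment, so $\Lv^\g \in \betM{\SlT}$ and $\g$ fails to satisfy the complemented arc. For the other: if $\arc{\cmpl{\SlT}}{\Lv^\g} \in \AS$ but $\g$ satisfied $\arc{\SlT}{\Lv}$, the inductive hypothesis would produce a morphism from $\ud{\SlT}$ into $\dg$ hitting $\Lv^\g$, making $\dg$ a zero sketch and contradicting the lemma that the co-limit is not zero. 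So the statement that ``the saturation property of $\dg$ plays no role in this lemma'' is exactly backwards: saturation and non-zero-ness of $\dg$ are the two pillars of the inductive step, and without them the lemma is not established. (Your side remark about $\eqs$ is fine: $\eqs$-arcs indeed cannot occur in basic drafts.)
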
  
\begin{proof} 
 By structural induction (on the total number of complemented slice arcs occurring in $\snt{D}$). %(see the Appendix). 
 \end{proof}   % \label{Lem:SatMor}  

	Finally, since $\cf{0} : \ud{\SlS} \mor \dg$, we have  $\cf{0}(\DLSS_0) \in \betM{\snt{S}} \neq \ES$. 

	Therefore, if $\GrG \not  \incl \EZS$, then $\GrG$ is not null.

\begin{Thrm} \label{Thrm:SndCmpND} 
Consider an $\rla$-ary expression $\RE$. 
 \Bd 
\item[{\sm ($\Der$)}] If $\RE \, \Der \, \GrH$ and $\GrH$ is zero, then $\RE$ is null.  
\item[{\sm ($\rtc{\red} \cnc  \rtc{\ex}$)}] If $\RE$ is null, then $\RE \,  \rtc{\red} \, \bsf{\RE}  \,  \rtc{\ex} \, \GrH$, for some zero $\rla$-ary graph $\GrH$.  
\Ed  
\end{Thrm}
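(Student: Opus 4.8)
The plan is to prove the two halves of Theorem~\ref{Thrm:SndCmpND} separately, reusing the scaffolding assembled in Sections~\ref{sec:GrLng} and~\ref{sec:GrClc}.

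\textbf{Soundness (the $\Der$ clause).} First I would observe that every rule of the calculus — the eight formula rules, the equality rule, the two complementation rules, the three structural rules, and the expansion rule — replaces an expression by an \emph{equivalent} one in the sense of $\eq$. For the conversion rules this was announced when each was introduced (and rests on the displayed formula/expression equivalences, on Proposition~\ref{Prp:Arcadgl} for the structural rules, and on elementary set-theoretic manipulation for complementation); for the expansion rule it is exactly Lemma~\ref{Lem:SndExp}. Since $\eq$ is a congruence (a conversion or expansion may be applied in any context, as stressed at the start of Section~\ref{sec:GrClc}), an induction on the length of the derivation $\RE \,\Der\, \GrH$ gives $\RE \eq \GrH$. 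If moreover $\GrH$ is zero, then $\GrH$ is null by Corollary~\ref{Lem:ZrSlGr}, hence $\RE$ is null. This direction is short and presents no real obstacle.

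\textbf{Completeness (the $\rtc{\red} \cnc \rtc{\ex}$ clause).} Here I would first apply Proposition~\ref{Prop:Red} to convert $\RE$ to a basic graph $\bsf{\RE}$ with $\RE \,\rtc{\red}\, \bsf{\RE}$; by soundness $\RE \eq \bsf{\RE}$, so $\bsf{\RE}$ is null. It then suffices to show: every null basic graph expands to a zero graph. I would prove the contrapositive, namely that a basic graph $\GrG$ with $\GrG \not\incl \EZS$ is not null — this is precisely the claim the text has been setting up. Fix a slice $\SlS \in \GrG \setminus \EZS$, i.e.\ $\SlS \in \NZS$. The heart of the argument is the construction, sketched in the excerpt, of a chain of slices $\SlS = \SlS_0, \SlS_1, \SlS_2, \dots$ all lying in $\NZS$, linked by morphisms $\Fm{n}: \ud{\SlS}_n \mor \ud{\SlS}_{n+1}$, arranged so that the co-limit sketch $\dg$ of the diagram of underlying drafts is \emph{saturated}: for every basic $\ara$-ary slice $\SlT$ and every $\Lw \in \NS_{\dg}^{\ara}$, either $\arc{\cmpl{\SlT}}{\Lw}$ is already an arc of $\dg$, or there is a morphism $\he: \ud{\SlT} \mor \dg$ with $\DLST^{\he} = \Lw$. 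To get this I would enumerate all such pairs $(\SlT, \Lw)$ and at stage $n$ handle the $n$-th pair by applying the expansion rule $\SlS_n \ex \{ \ggl{\SlS_n}{\Lw}{\SlT}, \SlS_n \ada \arc{\cmpl{\SlT}}{\Lw} \}$ and keeping, via Corollary~\ref{Lem:NEZ}, whichever of the two resulting slices remains in $\NZS$; in the second case the new slice literally contains the arc $\arc{\cmpl{\SlT}}{\Lw}$, and in the first case the morphism $\hta$ from Figure~\ref{Fig:PODr} supplies the required $\he$ (these properties are preserved along the later morphisms $\FM{i}{j}$, which is why one enumerates pairs over a growing vocabulary of names).

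From the saturated co-limit $\dg$ I would define the natural model $\gM$ with universe $M \deq \NS_{\dg}$ and $\ps^{\gM} \deq \{\Lv \in M^{\rla} : \arc{\ps}{\Lv} \in \AS_{\dg}\}$, and then prove Lemma~\ref{Lem:SatMor}: for any basic draft $\snt{D}$ and $\g : \NS_{\snt{D}} \to M$, we have $\g : \snt{D} \rar \gM$ iff $\g : \snt{D} \mor \dg$. The forward direction over predicate-symbol arcs is immediate from the definition of $\gM$; over a complemented-slice arc $\arc{\cmpl{\SlT}}{\Lw}$ one uses saturation plus the induction hypothesis (the complement being satisfied means no morphism $\ud{\SlT} \mor \dg$ hits $\Lw$, so by saturation $\arc{\cmpl{\SlT}}{\Lw} \in \AS_{\dg}$, and conversely if that arc is in $\dg$ then by Lemma~\ref{Lem:ZrSktch} no satisfying assignment of $\ud{\SlT}$ restricts correctly) — this is where the structural induction on the number of complemented slice arcs is needed, since $\SlT$ is itself a basic slice of strictly smaller complexity. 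Applying the lemma to $\ud{\SlS}_0$ and the morphism $\cf{0} : \ud{\SlS}_0 \mor \dg$ gives $\cf{0} : \ud{\SlS}_0 \rar \gM$, hence $\cf{0}(\DLSS_0) \in \betM{\SlS_0} = \betM{\SlS} \neq \ES$, so $\GrG$ is not null. Combining with Proposition~\ref{Prop:Red}: if $\RE$ is null then $\bsf{\RE}$ is null, hence $\bsf{\RE} \incl \EZS$, so by Lemma~\ref{Lem:EZZS} $\bsf{\RE} \,\rtc{\ex}\, \GrH$ for some $\GrH \incl \ZS$, i.e.\ $\RE \,\rtc{\red}\, \bsf{\RE} \,\rtc{\ex}\, \GrH$ with $\GrH$ zero.

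\textbf{Main obstacle.} The delicate point is the bookkeeping in the chain construction: one must enumerate pairs $(\SlT,\Lw)$ in a way that is exhaustive \emph{at the level of the co-limit} (new names are introduced by later gluings, so slices and lists over them must also eventually be treated), and one must verify that the ``already an arc / morphism exists'' dichotomy, once achieved at some stage, survives all subsequent morphisms $\FM{i}{j}$ — i.e.\ that $\cf{n}$ preserves it into $\dg$. Making this scheduling precise while keeping every $\SlS_n$ in $\NZS$ (which is exactly what Corollary~\ref{Lem:NEZ} licenses at each step) is the technical core; everything else is routine given the lemmas already proved.
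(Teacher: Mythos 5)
Your proposal follows essentially the same route as the paper: soundness via rule-wise preservation of $\eq$ (Lemma~\ref{Lem:SndExp} for expansion) together with Corollary~\ref{Lem:ZrSlGr}, and completeness via Proposition~\ref{Prop:Red}, the families $\ZS$, $\EZS$, $\NZS$ with Corollary~\ref{Lem:NEZ} driving the chain of $\NZS$-slices, the saturated co-limit sketch, the natural model, Lemma~\ref{Lem:SatMor} and Lemma~\ref{Lem:EZZS}. The only slip is the parenthetical appeal to Lemma~\ref{Lem:ZrSktch} in the converse direction of Lemma~\ref{Lem:SatMor}: what is needed there is that the co-limit sketch $\dg$ itself is not zero (which holds because every slice in the chain lies in $\NZS$, hence is non-zero, and any morphism into the co-limit factors through a finite stage), which is exactly how the paper's detailed argument closes that case.
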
  % \label{Thrm:SndCmpND} 

%---------------------------------------------------------------------
\section{Conclusion} \label{sec:Concl} 
We now present some concluding remarks, 
 including comparison with related works. 

	We have presented a refutation    graph calculus for classical first-order predicate logic. 
This sound and complete calculus reduces logical consequence  to establishing that a constructed graph is null, i.~e. has empty extension in every model.  
Our  calculus uses formulas directly and can represent them  by arcs.  

	We have a simple strategy for establishing that a graph $\GrG$ is null:    
first convert  $\GrG$ to basic form, then apply repeatedly the expansion rule, 
erasing slices found to be zero, which is decidable (cf.~\ref{subsec:Cnstr}), 
trying to obtain the empty graph. 
Conversion to basic form, though tedious, can be automated (cf.~\ref{subsec:Red}); 
some ingenuity may be required in selecting which slice of a graph to expand and 
how to do it (cf.~\ref{subsec:Drv}), 
but the embedded slices can provide a finer control.  
In fact, a  (human-guided) system may be envisaged. 

	The idea of using graphical representations for logic appears in several works. 

Girard's proof nets have been applied to classical logic~\cite{Rob_03}, 
where sequent proofs are translated to proof nets.
In our case, however, the (macroscopic) structure of normal derivations is rather simple: 
first conversions, then expansion (cf.~\ref{subsec:Drv}). 

Graph rewriting motivates a graphical representation of first-order predicate logic. 
For the binary fragment %with the binary relations 
(with $\eqs$), % and $\exists$, 
a representation of formulas by graph predicates has been obtained by Rensink~\cite{Ren_04}: 
 a  correspondence between sets  of graph predicates with depth up to $n$ and 
a hierarchy $\exists (\neg \exists)^n$.   
There are close similarities between some concepts  (our sketches are his graphs), 
but his graph predicates involve morphisms (even though they may be reminiscent of our arcs). 

	Our approach does resemble  Peirce's ideas~\cite{Sw_??} as formulated by Dau~\cite{Dau_06}. 
In our representation, we use names only for referring to them in the meta-language: 
if we erase these names, we obtain a representation quite close to the Peirce's ones  
(cf. Example \ref{Exmpl:DtQntRed} in~\ref{subsec:Red}).   
Besides our refutation approach with normal derivations,  there are  some differences: 
we allow formulas directly in the graphs (%that is why we 
and need conversion rules), rather than pre-processing  diagrams for them; 
Peirce considers the fragment $\neg$, $\land$ and $\exists$, 
whereas we use graphs to cope with $\lor$, which seems to lead to less cumbersome  representations; 
we handle $\eqs$, first as a $2$-ary predicate and then as  a special one, 
whereas Peirce represents it directly by identity lines, leading to more compact diagrams. 
So, there appear to be  advantages and disadvantages on both sides. 
 
	Some further work on our calculus would be: % as follows: 
%extension of our calculus to 
add function symbols   
(for this purpose, some ideas used for structured nodes~\cite{FVVV_09}  seem promising);  
provide a detailed comparison between it and~\cite{Ren_04} 
(such a comparison between Peirce's  and Rensink's approaches  is reported difficult~\cite{Ren_04}, p. 333);
develop a ``middle-ground'' between our approach and Dau's~\cite{Dau_06}, with the best features from each one.

%---------------------------------------------------------------------

%---------------------------------------------------------------------

\nocite{*}
\bibliographystyle{eptcs}
\bibliography{generic}

\end{document}